\documentclass[11pt]{article}

\usepackage{natbib}
\usepackage[normalem]{ulem}
\newcounter{ex}
\newcommand{\examp}[1]{\refstepcounter{ex}\theex\label{#1}}
\usepackage[margin=1in]{geometry}

\usepackage{bbm}
\usepackage{caption}
\usepackage{subcaption}
\usepackage{xcolor}
\usepackage{tikz}
\usepackage{amsmath,amsthm,amssymb}
\usetikzlibrary{arrows}
\usetikzlibrary{decorations.markings}
\newcommand{\given}{\; | \;}
\renewcommand{\b}[1]{\left[ #1 \right]}
\newcommand{\p}[1]{\left( #1 \right)}
\newcommand{\E}[1]{\mathbb{E}\left[#1\right]}
\newcommand{\EE}[2]{\mathbb{E}_{#1}\left[#2\right]}
\newcommand{\ind}[1]{\mathbbm{1}_{#1}}

\newcommand{\mc}[1]{\mathcal{#1}}
\newcommand\numberthis{\addtocounter{equation}{1}\tag{\theequation}}
\def\vbar{\overline{v}}
\newtheorem{theorem}{Theorem}[section]

\newtheorem{lemma}[theorem]{Lemma}
\newtheorem{proposition}[theorem]{Proposition}

\DeclareMathOperator*{\argmax}{arg\,max}

\usepackage{hyperref}
\hypersetup{
    colorlinks=true,
    linkcolor=blue,
    filecolor=magenta,
    urlcolor=cyan,
    citecolor=blue,
}

\urlstyle{same}
\newcommand{\xhdr}[1]{\paragraph*{\bf #1.}}
\def\barv{\overline{v}}
\def\eps{\varepsilon}

\renewcommand{\b}[1]{\left[ #1 \right]}

\begin{document}
\title{The Challenge of Understanding What Users Want:
Inconsistent Preferences and Engagement Optimization}

\author{Jon Kleinberg\footnote{Departments of Computer Science
and Information Science, Cornell University.}
\and Sendhil Mullainathan\footnote{University of Chicago Booth School of
Business}
\and Manish Raghavan\footnote{MIT Sloan School of Management and Department of
Electrical Engineering and Computer Science}}
\date{}

\maketitle

\begin{abstract}

Online platforms have a wealth of data, run countless experiments and use
industrial-scale algorithms to optimize user experience. Despite this, many
users seem to regret the time they spend on these platforms. One possible
explanation is that incentives are misaligned: platforms are not optimizing for
user happiness. We suggest the problem runs deeper, transcending the specific
incentives of any particular platform, and instead stems from a mistaken
foundational assumption. To understand what users want, platforms look at what
users do.
This is a kind of revealed-preference assumption that is ubiquitous in the way
user models are built.
Yet research has demonstrated, and personal experience
affirms,  that we often make choices in the moment that are inconsistent with
what we actually want. The behavioral economics and  psychology literatures
suggest, for example, that we can choose mindlessly or that we can be too
myopic in our choices, behaviors that feel entirely familiar on online
platforms.

In this work, we develop a model of media consumption where users have
inconsistent preferences. We consider a
platform which wants to
maximize user utility, but only observes behavioral data in the form of the
user's engagement.
We show how our model of users' preference inconsistencies
produces phenomena that are familiar from everyday experience, but difficult to
capture in traditional user interaction models. These phenomena include users
who have long sessions on a platform but derive very little utility from it, and
platform changes that steadily raise user engagement before abruptly causing
users to go ``cold turkey'' and quit.  A key ingredient in our model is a
formulation for how platforms determine what to show users: they optimize over a
large set of potential content (the content manifold) parametrized by underlying
features of the content. Whether improving engagement improves user welfare
depends on the direction of movement in the content manifold: for certain
directions of change, increasing engagement makes users less happy, while in
other directions on the same manifold, increasing engagement makes users happier.
We provide a characterization of the structure of content manifolds for which
increasing engagement fails to increase user utility. By linking these effects
to abstractions of platform design choices, our model thus creates a theoretical
framework and vocabulary in which to explore interactions between design,
behavioral science, and social media.
\end{abstract}

\section{Introduction}
There is a pervasive sense that online platforms are failing to provide a
genuinely satisfying experience to users. For example, when people are
experimentally induced not to use social media, they do not appear less happy;
in fact, some of them appear
happier~\citep{tromholt2016facebook,allcott2021digital}. What
explanations do we have for these problems?  Generically, we can think of a
firm’s objective as a combination of joint
surplus (make users happier) and surplus extraction (transfer from users to
firms). So, when platforms fail to provide satisfying experiences, a natural
explanation is that platforms are extracting surplus (e.g., optimizing ad
revenue to the detriment of users). Here we show, though, that there is a deeper
problem, one that transcends the narrow objective functions of any given
platform and arises even when platforms are simply trying to make users
happy.

The problem is a mismatch between our intuitive understanding of people and our
formal models of users.  User models have not just been immensely useful for
online platforms; they have been essential. Platforms loaded with a wealth of
content and numerous design choices must repeatedly answer the question, ``What
do users want?'' User modeling allows them to convert the wealth of data on user
behavior---click rates, dwell time, engagement---into inferences about user
preferences~\citep{aggarwal2016recommender}.
Whether for algorithmic curation or interpreting the results of A/B testing,
these inferences from data underlie most online platforms.
Despite their apparent success, though, most such models are built on a faulty
assumption.

Consider the following example. You are at a party and there is a bowl of potato
chips in front of you, and before you know it you have eaten nearly all of the
chips.
Would your host be right to conclude that you really enjoy chips and they should
refill the bowl? Possibly. Perhaps you really loved them and want more. Or
perhaps you want the host to save you from yourself, and you are hoping that
they will put the bowl elsewhere even as they are in the process of refilling
it. Moreover, the answer might well be food-dependent; perhaps you would want
the host to refill a bowl of salad, but are hoping they don't refill the bowl of
chips.
This example and
others like it have been empirically analyzed for several decades
~\citep{thaler2015misbehaving}. Two salient points emerge. First, your behavior
(whether to eat another chip) does not match your preferences and so you can end
up eating more chips than you would like. Second, the host who mistakes your
behavior for your preferences might make matters worse by giving you yet more
chips.

The models of users implicit in textbooks, the ones that underlie platform
design and optimization, ignore the possibility of such choice inconsistencies \citep{ekstrand2016behaviorism,lyngs2018so}. 
Instead, platforms optimize for behavioral signals such as ``likes, comments and
playtime'' when ranking content~\citep{nyt-tiktok}.
These choice inconsistencies appear for many
reasons. For example, people might over-weight the present: we are myopic now
but want our future ``selves'' to be patient. Such inconsistencies signal a
deeper failure: a violation of revealed preference, which
traditional user models
implicitly assume to go from measured behavior to inferred
preference~\citep{aggarwal2016recommender}.
Abstracting from the specific psychological
mechanisms~\citep{thaler1981economic,akerlof1991procrastination,laibson1997golden,o1999doing,fudenberg2006dual},
a straightforward way to understand inconsistencies is
formalized in the ``two minds'' approach: one ``self'' is
impulsive and myopic while the other ``self'' is forward-looking and
thoughtful.
We denote these two ``selves'' as  ``system 1'' and ``system 2'' for
evocativeness,
though that terminology has additional psychological connotations we do not rely
on here.
Crudely put, in our language, system 2 will have a target number of chips it wants
to eat (or perhaps none at all), while system 1 will simply get drawn by the
next chip and eat it, irrespective of how
many have been consumed.
This example
illustrates a category of findings that call into question revealed
preference: people can choose things they don't want, and as a result 
choices do not always reveal preferences. 
One might say a person has two inconsistent sets of preferences, but here we
take the perspective of the long-run self (system 2), viewing
system 2's preferences as the ``actual'' preferences of the individual,
and the impulsive system 1 preferences as impeding the realization
of these actual preferences.
What happens when we apply this perspective to the problem of
modeling user preferences and designing social media
platforms based on such models?

\xhdr{The present work: Platform design when users have inconsistent
preferences}

In this paper, we explore the consequences of platforms' user modeling in the
case where users have conflicts within themselves. This issue has begun
to receive attention in several recent
papers~\citep{lyngs2019self,lyngs2020just,milli2021optimizing,allcott2021digital},
largely through empirical investigation; here, we contribute to this growing
line of work by proposing a theoretical foundation for preference
inconsistencies and digital platforms. In our model, we assume the platform is
optimizing for user welfare, so our goal is not to understand the distortions
created by profit maximization but instead those created by the user model
itself.\footnote{We view this approach is essential even if one believes profit
  maximization is a large part of the problem. Even in that case, experience
  with other domains has shown that the interaction of profit maximization with
  rich user (or consumer) psychology creates inefficiencies in complex ways that
  cannot be understood without a clear model of user behavior. For example, one
  of the key lessons from efforts at regulation in such domains is that these
  types of models are crucial for delineating the types of practices to be
regulated~\citep{alm2017using}.}

Since the underlying psychology is rich, there is no single standard
model. Instead, we build a simple model of such conflicts to
understand how they play out in an online platform context, drawing on models
from the economics and computer science
literatures~\citep{o1999doing,laibson1997golden,kleinberg2014time}. 
Roughly speaking, our basic model (described in more detail
beginning in the next section) supposes that a user 
encounters a stream of content on
a platform as a sequence of discrete items $t = 0, 1, 2, \ldots$.
For example, the items may be posts, tweets, or videos.
Each item produces a value for system 1 and a value for system 2.
When the user consumes item $t$, their system 1 response 
is impulsive and faster than their system 2 response; 
thus, if item $t$ appeals to system 1, then the user automatically moves on to
the next item
(thereby remaining on the platform) independent of whether system 2
would like to remain on the platform or not.
If item $t$ does not appeal to system 1, then control over 
the user's decision about whether to remain on the platform passes to system 2.

Unlike system 1, which simply reacts to the current item being consumed,
system 2 is forward-looking and decides
whether to remain on the platform
based on a prediction 
both of the future value of its own choices as well as the value
(positive or negative) of the items consumed because of system 1's choices.
In the language of behavioral economics, system 2 is sophisticated (as opposed
to system 1, which is naive)~\citep{o1999doing}.
System 2 also controls the decision to go to the platform in the first
place, and this too is based on whether the expected value of a visit to the 
platform (due to its own choices and system 1's) is positive or negative.
We will see that this model provides a theoretically clean way to
capture complex phenomena about the relation between users and platforms,
including the ways in which a user might consume more content than
they would like, and the ways in which a user might both avoid going to 
a platform but also spend long amounts of time on it when they do go to it.

In many ways, browsing and chip eating have much in common. Because we may
consume the next chip or piece of content even when we don’t want to, we can
end up feeling we over-consumed. The platform too can then be like a
well-intentioned host who, by giving us more of the chips or content that we
consume, actually makes matters worse. But the case of online content has
additional richness. The space of online content is immense and heterogeneous
and the platform has fine-grain control, in ways that are are qualitatively
larger than we see for most goods in the off-line world. Put crudely, it begins
to stretch the metaphor to imagine a party host who can bioengineer chips with
incredible precision to maximize your consumption, but that is what happens when
platforms choose from an ocean of content to maximize engagement.
We capture these effects in our model by representing each piece
of content in terms of underlying parameters that determine the user's
response to it;
this defines a space of possible content, and the platform's design
decisions are to select items from a feasible region
corresponding to what we term a \textit{content manifold} within this space.
We imagine a platform varying its content along this content manifold,
monitoring user behavioral metrics like session length (as a proxy for
engagement) as it does so.
In our analysis,  we will not focus on platform incentives---for
example, that platforms get paid through clicks. Instead, we will show
how a platform whose only objective is maximizing user well-being
might make poor choices because its models assume revealed
preference. Many of the problems associated with digital content, in
this model, can be explained through this mismatch between user
modeling and consumer psychology, without resort to mis-alignment
between user preferences and platform profits.

We show that when a platform increases engagement, this might correspond
to an increase in user welfare (as evaluated by system 2), but
it might also correspond to a reduction in welfare as system 1 comes
to control an increasing fraction of the decisions.
The difficulty in distinguishing between these scenarios lies at the
heart of several challenges for social media:
the challenge of designing for users with internal conflicts in 
their preferences, and the challenge in evaluating design decisions
from even detailed measurements and explicit A/B tests of user behavior.
We approach these challenges within our model through
a characterization of content manifolds
for which maximizing engagement does not maximize user utility, and we offer
suggestions for how one might tell different types of content manifolds apart.

\xhdr{Implications of our model}
Notice how a sophisticated host who recognizes these problems would
behave. First, they would understand that not all party foods are chips:
automatically refilling the salad bowl is perfectly fine.\footnote{Making
matters worse, the conflict between system 1 and system 2 works differently in
different people: some find nuts a problem, others do not.} Second, they would
recognize that the optimal solution is not to have no chips at all. These goods
do not necessarily rise to the level of
cigarettes or other addictive substances where banning them could be optimal.
The host, then, needs (i) a strategy for understanding how different kinds of
content may be more problematic and (ii) to use that knowledge to decide how
heavily to rely on consumption (engagement). Those are indeed the two challenges
platforms face in our model.

Our model seeks to address these challenges in a concrete, stylized form,
highlighting a number of crucial aspects.
First, it argues that a platform needs to accurately model
the internal conflicts 
in users' preferences when it makes decisions about its content.
It is wrong to ignore the addictiveness of online content (via its
appeal to system 1), but it is also wrong to assume that all online
content is addictive, or that all appeals to system 1 necessarily
reduce the welfare of users.
Rather, content is heterogeneous in these parameters and lies on
a content manifold of possible parameter values.
This leads to the second category of implications of our model, which we
explore in Section~\ref{sec:opt}:
that the manifold of feasible content and its structure
determine the extent to which metrics like engagement can serve---or fail
to serve---as reasonable proxies for measuring underlying user utility.
And when engagement and utility are misaligned, we show that this stems
from a small number of possible structural properties of the underlying content
manifold.
While the structure of this manifold is unobservable to the platform in our
model, in practice, platforms can take concrete steps to better understand the
features of their content, leading to a third category of implications:
platforms can apply behavioral models to learn the properties of their content.
Section~\ref{sec:strategies} illustrates how a platform might apply
behavioral models to leverage even small amounts of auxiliary data.
Finally, a fourth category of implications concerns the user
interface (UI) decisions that platforms make.
We argue that a broad range of UI designs, including
autoplay, embedded media, enforced breaks, and 
the ``width'' of a set of recommendations, all have natural
interpretations in our model of internal user conflict, 
and our model can therefore provide insight into the effect they may
have, consistent with prior
work~\citep{lyngs2019self,lyngs2020just,moser2020impulse}. Crucially, our model
also argues that UI design choices are not
separable from other content choices, since their effect depends
on where the platform has positioned its content on the underlying content
manifold.
Automatic refills from an attentive host will
differentially impact consumption of chips and salad; likewise, UI features like
autoplay will have different effects on engagement for different types of
content.
As a concrete illustration of how our model can provide insights
about design decisions, we show in Section~\ref{sec:tree} how to use it for reasoning about
{\em how many} choices to offer a user when presenting content recommendations.

More broadly, our model attempts to grapple concretely with 
the often-expressed sense that something is broken with online content
platforms. It sometimes becomes very tangible when a user announces
publicly that they will be deleting their account. This same user often
is the one who has the highest engagement. Such ``cold turkey'' behavior
begs comparison to other addictive commodities. Our model formalizes
the logic behind such comparisons, but also shows how the heterogeneity
of the content and the design flexibility of the platform 
lead to a richer picture in reality.
Social media can be like potato chips
or it can be like salad. The outcome depends, of course, on the
underlying content. It also crucially depends on the choices of
platforms. Currently, these choices are guided by a misleading model of
people's preferences.  That means even well-intentioned platforms can
end up serving chips when they think they are serving salad.

\section{A Model of User Engagement and Utility}
\label{sec:model}

We now describe the basic version of our model, which begins
as in the previous section with 
a user who encounters a stream of content (posts, tweets, videos)
on a platform as a sequence of discrete items $t = 0, 1, 2, \ldots$.

\subsection{User Decisions about Consumption and Participation}

\xhdr{System 2}
The user is a composite of two distinct decision-making agents,
whom we term {\em system 1} and {\em system 2}.
System 2 experiences the sequence of item as follows:
\begin{itemize}
\item 
Item $t$ produces utility $I_t \cdot v_t$ for system 2 if it is consumed, where
$v_t$ represents the value of the content to the user and $I_t$ captures
diminishing returns over session length.
We will assume that $v_t$ are independent and identically 
drawn from a distribution $\mc V$, 
and that the user knows the mean of this distribution $\barv$.
$\{I_t\}_{t=0}^\infty$ is a sequence of random variables
variable that capture diminishing returns. We consider different choices of
$I_t$ in what follows.
\item 
If the user stops their session and leaves the platform 
before consuming item $t$, they would receive a utility of $W$ 
from their outside activities away from the platform;
thus, $W$ is the opportunity cost of consuming the next item,
and system 2's net utility from item $t$ is $I_t v_t - W$. Without loss of
generality, we will assume for much of the paper that $W = 1$. (Equivalently, we
could scale $\barv$ by $1/W$.)
\item 
In our basic version of the model, we instantiate $I_t$
by positing that for some $q > 0$,
there is a probability $q$ after each item that system 2 wants
to continue, and a complementary probability $1 - q$ that system 2
views itself as ``done'' and derives no further utility.
In effect, this determines a randomly distributed target session length
for system 2, after which it stops accumulating utility.
(In Appendix~\ref{app:equiv} we describe equivalent constructions that
lead to this formulation.) Formally, $I_t$ is the waiting time for a Bernoulli
event with bias $1-q$. While the bulk of the paper will consider this
instantiation of $I_t$, in Section~\ref{sec:gamma}, we present an alternative
formulation with slightly different properties.
\end{itemize}

If system 1 played no role in the user's decisions, then 
each of the user's decisions would reflect system 2's preferences,
step by step.
In particular, without system 1,
the user would participate in the platform if and only if 
$\barv$, the expected value of $v_t$, is at least $W$.  
After each step on the platform, the user would decide they
are done with probability $1 - q$;
this means that the length of the user's session is a random variable $X$
distributed geometrically,
as the waiting time for a Bernoulli event of probability $1 - q$.
Thus the user would achieve an expected net utility of
$\barv - W$ over each of the steps $t$ from $1$ to $X$.
From an increase in session length we would infer there had been an increase in
q, and therefore a corresponding increase in the user’s utility.
In this sense, the
revealed-preference assumption is a reasonable one in the absence of system 1:
longer sessions mean happier users.\footnote{As we will see in
  Section~\ref{sec:opt}, there are additional nuances here. Engagement and utility
can be misaligned even in the absence of system 1.}
In keeping with the intuition flowing from this type of scenario,
we will refer to session length as {\em engagement},
and when platforms use the revealed-preference assumption to 
try maximizing engagement, we can think of this as an apparently reasonable
approach to maximizing user utility when system 1 plays no role in a user's
decisions.

\xhdr{System 1}
The situation becomes more subtle when system 1 also plays a role in 
the decisions about consuming content.
\begin{itemize}
\item 
We assume that each item $t$ also produces utility for system 1,
in this case a value $u_t$ drawn from a distribution $\mc U$.
The items $u_t$ are independent of each other, but $u_t$ may be
correlated with $v_t$ (we will consider this in more detail below).
Let $p$ be the probability that $u_t > 0$.
\item 
When the user consumes item $t$, their system 1 response 
is impulsive and faster than their system 2 response; and so if $u_t > 0$ then
the user automatically consumes the next item $t+1$ without considering
leaving the platform.
\item 
If $u_t \leq 0$, then system 1 takes no action, and control of
the user's decision about whether to remain on the platform passes to system 2.
\end{itemize}
Thus it suffices to assume that there is some fixed probability $p$ with
which system 1 is active at any given timestep. In general, this may not be
true; content at the top of a ranking my differ distributionally from content
further down. However, if users' sessions are short relative the pool of
eligible content that a platform maintains (despite being subjectively too long
from the user's perspective), then the distribution from which content are drawn
will not change significantly over the course of a session.

System 2 is forward-looking, however, and so when system 1 plays
a role, system 2's decision whether to remain on the platform becomes more complex:
it needs to evaluate its expected utility over all future steps,
including the steps $j$ in which system 1 makes the choice regardless
of whether system 2 is done consuming content or not.
Similarly, system 2 must engage in this reasoning when deciding
whether to participate in the platform in the first place.
There are three key parameters that govern this choice:
(i) $\barv/W$, which determines the relative utility system 2 experiences
per step;
(ii) $q$, which determines how long system 2 continues to derive utility
from items in a session;
and (iii) $p$, which determines the probability that system 1 controls
any particular decision about consumption.
We will think of $\barv$ as determining the {\em value} of the
content, $q$ as determining the {\em span} of system 2's
interest in the content, and $p$ as determining the
{\em moreishness} of the content---the extent to which system 1
wants to consume ``one more item,'' analogously to someone wanting
to consume one more potato chip.
Moreishness is clearly related to our earlier discussion of
addiction, but it doesn't include all of the behavioral properties of addiction.
Formal models of addictive substances don't just model
preference inconsistency; they also model how past consumption changes
physiology and therefore future utility, both with and without
consumption~\citep{becker1988theory}.

In order to implement this behavioral
model, a user need not know all of their parameters. It suffices for the user to
be able to accurately estimate their expected utility from using the
platform.
For simplicity, our model assumes that content are (1) independent and identically
distributed and (2) not dynamically updated. Removing these assumptions is an
interesting direction for future work.

\xhdr{The user's utility and participation decisions}
The model provides a clean way to think about how these key
parameters govern system 2's decision
whether to participate in the platform.
If system 2 participates, then 
it will derive expected net utility $\barv - W$ per step for a random number of
steps $X$ until system 2 decides it is done (with probability $q$ each
step).\footnote{Note that this is true even if $u_t$ and $v_t$ are correlated.}
Once it decides it is done, however, the user cannot necessarily
leave the platform immediately; rather, 
system 2 is still at the mercy of 
system 1's item-by-item decisions to continue.
As long as $u_t > 0$, which happens with probability $p$,
system 1 will continue even though system 2 is
now deriving net utility $-W$ per step.

This means that for purposes of analysis we can think of the user's
session as divided into two phases.
\begin{itemize}
\item The first phase runs up until the point at which system 2 decides
it is done and stops collecting further utility.
As noted above, the length of this session is distributed as
the waiting time for a Bernoulli event of probability $1 - q$.
It therefore has expected length $1 / (1-q)$, and since system 2
collects expected utility $\barv - W$ in each of these steps,
system 2's expected utility in the first phase is
$\dfrac{\barv - W}{1 - q}$.
\item The second phase 
is the remainder of the session after the first phase ends.
The second phase only lasts as long as system 1 remains engaged, independently
with probability $p > 0$ for each item.
It is therefore the waiting time for a Bernoulli event of probability 
$1 - p$, but with the additional point that this probability is applied
to the first step as well: if system 1 isn't interested in the last item
that system 2 wanted to consume in the first phase, then system 2 can
immediately leave, and this second
phase has a length of zero.
The expected length of the second phase is therefore
$p / (1-p)$;
system 2 collects utility $-W$ in each of these steps
(since it derives no utility from the content, and loses $W$ from
the opportunity cost of remaining on the platform), and 
therefore system 2's expected utility in the second phase is
$-\dfrac{pW}{1-p}$.
\end{itemize}

Adding up these expectations over the two phases, we see that
System 2's expected utility is a random variable $S$ with expected value
\begin{equation}
  \E{S} = \max\p{\frac{\barv - W}{1 - q} - \frac{pW}{1 - p}, ~ 0}.
\label{eq:utility}
\end{equation}
The user doesn't visit the platform at all when $\frac{\barv - W}{1 - q} -
\frac{pW}{1 - p} < 0$, since doing so would result in negative utility.
The user's session length is a random variable $T$ that is deterministically
equal to
$0$ when the user doesn't visit the platform, and otherwise has expected value
\begin{equation}
\E{T} = \frac{1}{1 - q} + \frac{p}{1 - p}.
\label{eq:engagement}
\end{equation}
When the parameters are not clear from context, we will sometimes write these as
$\E{S(p, q, \barv)}$ and $\E{T(p, q, \barv)}$.

Equations (\ref{eq:utility}) and (\ref{eq:engagement}) 
make clear how the conflict within the
user plays out in expected utility, and it also suggests some of the 
challenges we'll see in interpreting a user's behavior when the
user has internal conflict.
First, when $p = 0$, we see that a user's engagement---as measured
by expected session length---grows monotonically in the user's
utility.
This motivates the use of engagement maximization as a heuristic for
improving user utility under the revealed-preference assumption that
system 1 plays no role in the user's behavior.
In contrast, when $p > 0$, engagement and utility are no longer
as closely aligned.
For example, when a user (via system 2) 
decides not to go to the platform at all,
we see from Equation (\ref{eq:utility}) that
this might be because $\barv < W$, rendering the first term negative;
but it might instead be the case that $\barv > W$ but $p$ is so large
that the negative second term outweighs the positive first term.
Similarly, we see from 
Equation (\ref{eq:engagement}) that
a longer session---greater engagement---could be the result of high span ($q$),
in which case system 2 is deriving high net utility,
or because of high moreishness ($p$), leading to a reduction 
in system 2's utility.
A crucial aspect of our argument is that 
without understanding the underlying parameters of the content it is
providing---$p$, $q$, and $\barv$ in the case of our model---the platform cannot
distinguish these possibilities purely from the user's
in-session behavior.

\subsubsection{An alternative formulation for diminishing returns.}
\label{sec:gamma}

A feature of our basic model of system 2 utility is that it is memoryless,
meaning that as long as system 2 is still interested in the platform ($I_t =
1$), the agent's decision to continue doesn't depend on the prior realizations
or the timestep $t$. This makes analysis particularly simple. Here, we present an
alternative, stateful model of system 2 utility and show that while the analysis
becomes more complex, the agent's behavior remains qualitatively similar.

Qualitatively, we still seek to encode the fact that the agent experiences
diminishing returns on the platform. Recall that in our prior model, system 2's
expected utility for consuming content at time $t$ is given by $\vbar \cdot
\E{I_t}$, where $I_t$ is the product of $t$ Bernoulli random variables with bias
$q$. Here, we instead consider a model where system 2 deterministically satiates
at a constant geometric rate, parameterized by some $\gamma \in [0, 1)$. In this
version of the model, $I_t = \gamma^t$, and system 2's expected utility at
timestep $t$ is given by $\gamma^t \vbar$. Note that system 2's participation
decisions are no longer independent of $t$: the longer the agent has been
consuming, the less utility it derives.

In Appendix~\ref{app:stateful}, we show that the agent's behavior is
qualitatively similar to our original model: there exists some $t^*$ (which is a
function of $(p, \gamma, \vbar/W))$ such that system 2 chooses to continue at any
$t \le t^*$ and, given the chance, leaves the platform at any $t > t^*$.
We then derive for the agent's expected consumption and utility:
\begin{align*}
  \E{T}
  &= \begin{cases}
    t^* + \frac{1}{1-p} & t^* \ge 0 \\
    0 & \text{otherwise}
  \end{cases} \\
  \E{S}
  &= \vbar \p{\frac{1-\gamma^{t^*}}{1-\gamma} + \gamma^{t^*} \frac{1}{1 - p
  \gamma}} - W \p{t^* + \frac{1}{1-p}}.
\end{align*}
In the remainder of this paper, we focus on our original model of $I_t$ as the
product of Bernoulli random variables. We defer further discussion and analysis
of the alternative formulation described here to Appendix~\ref{app:stateful}.

\subsection{Platform Decisions about Content: An Overview}

So far, our model has considered 
how the user's behavior can vary with
the underlying parameters. But this model of behavior only tells part of the
story: crucially, the parameters $p$, $q$, and $\barv$ describing the content
are not exogenous, but
determined by the platform itself. Online platforms extensively optimize user
experiences through techniques including A/B testing and machine learning. Our
goal here will be to use our model to analyze the impacts of platform
optimization on users.

In general, platform optimization is data-driven. Platforms typically collect
extensive behavioral data---clicks, watch time, etc.---over which they optimize
anything from user interface choices to the content they recommend to a user.
Even if the platform seeks to maximize the
utility of its users (in keeping with our initial
decision to focus on joint surplus instead of transfer of
surplus), in
practice, platforms have little or no data on the actual utility of their users.
As a result, the platform will follow the revealed preference assumption
discussed above: as we argued there, when a user's decisions are made only by
system 2, maximizing engagement corresponds to maximizing user utility.
This limitation applies to altruistic third parties as well: for example,
suppose a user could use a browser plugin to re-rank content on the platform
based on their activity (see, e.g., \citet{xu2008user}). Without further
information, the browser plugin, too, will be forced to simply prioritize
content that leads to higher engagement.

In the language of our model, the platform observes $T$ but not $S$, and when
making decisions, chooses the option that maximizes $\E{T}$. In practice, of
course, platforms have collected a variety of behavioral data far more granular
than session lengths; for simplicity, we focus on session length $T$ as a target
metric, and we consider the potential use of other forms of behavioral data in
Section~\ref{sec:strategies}.

Engagement and utility can diverge, and optimizing engagement may
fail to optimize utility. The extent of this divergence, however, depends
crucially on platform decisions and the underlying features of the content.
Under this intuition, platforms are best understood as analogous to food:
for some types of food (like junk food), unhealthiness is inseparable from
enjoyment; but for others (like salad), enjoyment and healthiness can be
positively correlated. Content (and food) may thus lie on different
metaphorical manifolds. Junk food occupies a manifold with low nutritional
value and high moreishness, while salad occupies a manifold with high
nutritional value and low moreishness. Within these manifolds, some junk food
may be healthier than others, and some salads may be more moreish than others,
but in general, junk food and salad have different properties. Optimizing for
engagement might therefore produce healthy salads but unhealthy candy.

But if content manifolds are inherently similar to manifolds for
different types of food, why do online platforms raise qualitatively new
questions compared to, say, a restaurant deciding its menu? There are two
important distinctions to draw in this analogy. First, online platforms and
restaurants differ in their power to observe consumer welfare. While restaurants
know the nutritional content of their products, online platforms typically don't
observe the utility users derive from their content. They instead rely on
proxies to estimate the impacts of their choices on user utility. Second, the
space in which online platforms operate is enormous: new types of content are
invented on a daily basis, in contrast to the relatively modest pace at which
new foods are brought to market. Together, these differences imply that
platforms must select content from complex, unknown manifolds with little or no
information on the true impact that these selections have on user well-being.

\section{Optimization on Content Manifolds: Basic Results}

Guided by this discussion,
we can think about a platform's optimization over
an underlying content manifold as follows.
Suppose the platform is currently serving content from a distribution with
some set of parameter values $p$, $q$, and $\barv$.
Now, suppose the platform wants to modify this distribution to increase 
the user's net utility.
In general, it is not reasonable to assume that a platform 
can directly measure or intervene on 
the underlying quantities $p$, $q$, and $\barv$; rather, it is more natural to
assume that it has
control over some features $x$ of the content it selects, and there is a latent
mapping from $x$ to these hidden values $(p, q, \barv)$.

Thus, a platform observes the features $x$ of its content and the 
session-level behavior of its users---i.e., their amount of engagement.
As it varies the content by selecting for different features $x$, it sees user
behavior change. What is actually taking place is that these modifications to
$x$ are causing changes to the underlying content parameters $(p,q,\barv)$:
varying $x$ is causing the platform to move around on a content manifold in
$(p,q,\barv)$-space, and inducing changes in user behavior as 
a result.

In the next section, we develop a general formalism for analyzing
this type of optimization, but it is useful to start with some
concrete examples that illustrate the basic phenomena that arise.

\subsection{Three illustrative examples}

\xhdr{Example \examp{ex:quality}: Increasing quality}
Suppose first that the platform starts in a state where its
content satisfies $p = 0$, $q = q_0 < 1$, and $\barv = v_0 > W$.
Thus,
the content produces no internal conflict in the user (since $p = 0$
and hence system 1 is never active), and the content is appealing enough 
that users will participate (since along with $p = 0$ we have 
$\barv > W$).

Suppose also that the platform has the ability to modify its content
along a manifold parametrized by a single value $z \in [0,\eps]$:
the point on the content manifold corresponding to $z$ has 
$$p = 0; \hspace*{0.15in} q = q_0 + z; \hspace*{0.15in} \barv = v_0 + z.$$
(We'll assume that $\eps$, the maximum value of $z$, is small enough
that $q = q_0 + z$ remains strictly below $1$, i.e., $\eps < 1-q_0$.)
As $z$ increases, both the value and the span of the content increase;
since both of these raise system 2's utility, we can think of this as
a change toward content of greater quality.
The user's engagement $\E{T}$ also increases in $z$.
This content manifold therefore illustrates the basic motivation for
engagement-maximization in a world of revealed preference:
by choosing the point on the content manifold that maximizes
$\E{T}$, the platform has also maximized the user's expected utility $\E{S}$.

\begin{figure}[t]
  \includegraphics[width=0.45\linewidth]{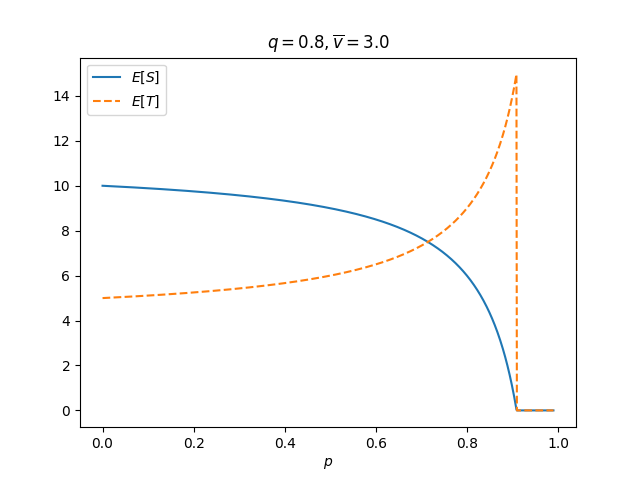}
  \includegraphics[width=0.45\linewidth]{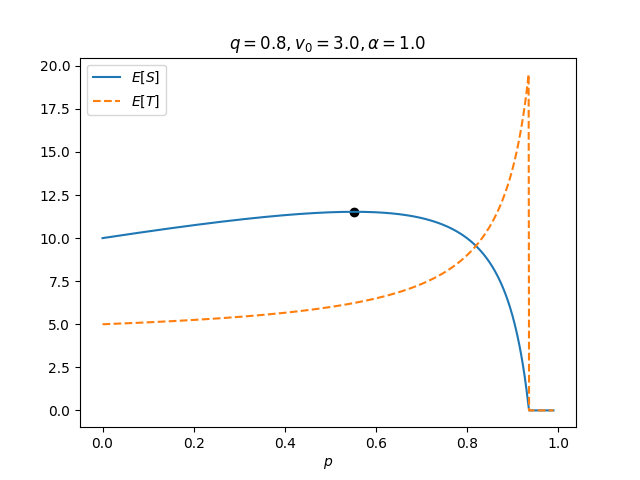}
  \caption{{\em Left panel (Example~\ref{ex:moreishness}):} For fixed $q$,
    $\barv$, and $W=1$, engagement and utility vary with $p$.  {\em Right panel
    (Example~\ref{ex:both}):} When $q$ and $W$ are fixed and $p$ and $\barv$ are
    positively correlated, engagement and utility are aligned up to a point.
    Here the expected value of each item ranges from $v_0$ when $p = 0$ to $v_0
  + \alpha$ when $p = 1$.}%
  \label{fig:consumption_surplus}
\end{figure}

\xhdr{Example \examp{ex:moreishness}: Increasing moreishness}
Suppose again that the platform starts in a state where its
content satisfies $p = 0$, $q = q_0 < 1$, and $\barv = v_0 > W$.
But now the content available to the platform is organized along
a different content manifold.
It is again parametrized by a single value $z$, but now
the point on the content manifold corresponding to $z$ has
$$p = z; \hspace*{0.15in} q = q_0; \hspace*{0.15in} \barv = v_0$$
for $z \in [0,1)$.
In other words, modifications to the content are entirely in the
direction of greater moreishness.

Now, as the platform increases $z$, engagement $\E{T}$
and utility $\E{S}$ vary as shown in 
the left panel of Figure \ref{fig:consumption_surplus}
(depicted with a specific choice of values for $q_0$ and $v_0$, though
the shape is qualitatively the same regardless of the exact
values for these quantities).
In particular, for small values of $z$, the engagement---which the 
platform can observe directly through session length---will go up, 
while the user's utility---which the platform cannot observe directly---will go
down.

Comparing Examples~\ref{ex:quality} and~\ref{ex:moreishness} for small values of
$z$ highlights a crucial fact that our model captures about platform
optimization: the 
platform cannot tell which case it is in purely from session-length
statistics.
In both cases, it is increasing $z$ and seeing engagement go up;
but this might be in service of greater user utility (when the content
lies on the content manifold in Example~\ref{ex:quality}) or it might be at the
expense of
user utility (when the content lies on the content manifold in
Example~\ref{ex:moreishness}).

As $z$ becomes sufficiently large, 
our model captures a second key phenomenon:
there comes a point in the trajectory toward larger $z$ when
the user's expected utility suddenly becomes negative, and
the user would stop participating in the platform:
at this point, the engagement $\E{T}$ drops discontinuously to $0$ 
with increasing $z$.
More precisely, it is not difficult to prove the following fact about
this example, which is also suggested
by the plot in the left panel of Figure \ref{fig:consumption_surplus}.

\begin{proposition}
  \label{prop:ex2}
The value of $z$ that maximizes $\E{T}$ produces
a utility of $0$ for the user.
\end{proposition}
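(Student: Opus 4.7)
The plan is to plug the parameterization of Example~\ref{ex:moreishness} into the formulas (\ref{eq:utility}) and (\ref{eq:engagement}) and then locate the maximizer of $\E{T}$ directly. Under $p = z$, $q = q_0$, $\barv = v_0$, $W = 1$, system~2's expected utility equals $\max\!\left(\tfrac{v_0 - 1}{1 - q_0} - \tfrac{z}{1 - z},\, 0\right)$, and engagement equals $\tfrac{1}{1 - q_0} + \tfrac{z}{1 - z}$ when the user participates and $0$ otherwise.

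The first step is to identify the participation region. Since $v_0 > 1$, the utility expression is strictly positive at $z = 0$, and since $\tfrac{z}{1-z}$ is continuous and strictly increasing from $0$ to $\infty$ on $[0,1)$, there is a unique threshold $z^* \in (0,1)$ at which $\tfrac{z^*}{1-z^*} = \tfrac{v_0 - 1}{1 - q_0}$; the user participates exactly when $z \leq z^*$, and $\E{S} = 0$ at $z = z^*$ by construction.

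The second step is to analyze $\E{T}$ on each side of $z^*$. On the interval $[0, z^*]$, engagement $\tfrac{1}{1-q_0} + \tfrac{z}{1-z}$ is strictly increasing in $z$, so its maximum over this interval is attained at $z = z^*$. For $z > z^*$, the user leaves the platform entirely and $\E{T} = 0$, which is strictly less than $\tfrac{1}{1-q_0} + \tfrac{z^*}{1-z^*}$ achieved at $z^*$. Hence the global maximum of $\E{T}$ over $[0, 1)$ is achieved precisely at $z = z^*$, where $\E{S} = 0$.

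There is no real obstacle here; the only point requiring a sentence of care is the tie-breaking at $z = z^*$, where system~2 is indifferent between participating and not. I would resolve this by noting that the supremum of $\E{T}$ over $[0, z^*)$ is $\tfrac{1}{1 - q_0} + \tfrac{z^*}{1 - z^*}$ and is attained continuously as $z \uparrow z^*$, and that the participation-inducing convention at the threshold makes $z^*$ itself a maximizer. Either way, the engagement-maximizing point lies on the boundary of the participation region, which is exactly where user utility equals zero.
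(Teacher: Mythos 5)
Your proof is correct and takes essentially the same approach as the paper's: both identify the participation threshold where $\E{S}=0$ (your condition $\tfrac{z^*}{1-z^*} = \tfrac{v_0-1}{1-q_0}$ is exactly the paper's $p_T$) and use the strict monotonicity of $\E{T}$ in $z$ over the participation region to conclude that the engagement maximizer sits on that boundary. Your handling of the tie-breaking at $z = z^*$ is somewhat more explicit than the paper's, but the argument is the same.
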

\begin{proof}
  Let $z_S$ and $z_T$ be the values of $z$ that maximize $\E{S}$ and $\E{T}$
  respectively.
  First, note that because $p = z$, $\E{S}$ is trivially maximized by $z_S =
  0$, as $\E{S} = \frac{v-W}{1-q} - \frac{pW}{1-p}$.
  To show that $\E{S} = 0$ when $z = z_T$, define
  \begin{align*}
    z_T = p_T &\triangleq 1 - \frac{W}{\frac{v-W}{1-q} + W},
  \end{align*}
  and note that for $p = p_T$, $\E{S} = 0$. Observe that because $\E{T} =
  \frac{1}{1-q} + \frac{p}{1-p}$, whenever $\E{S} > 0$,
  \begin{align*}
    \frac{d}{dp} \E{T} = \frac{1}{(1-p)^2} > 0
  \end{align*}
  whenever $\E{S} > 0$. Thus, no $p$ with $\E{S} > 0$ can maximize $\E{T}$, and
  so $\E{T}$ must be maximized by $p$ such that $\E{S} = 0$. The only such $p$
  with non-zero $\E{T}$ (i.e., the agent chooses to use the platform at all) is
  $p_T = z_T$ as defined above.
\end{proof}

A platform that was using revealed preference to interpret session
length as a measure of engagement and hence utility
would thus be left with a puzzling set of observations to interpret in
this second scenario:
as it modified the content by varying $z$, engagement increased
steadily until the user abruptly went ``cold turkey'' and stopped
using the platform at all.

\xhdr{Example \examp{ex:both}: Interactions between moreishness and quality}
Examples~\ref{ex:quality} and~\ref{ex:moreishness} are extremely simple, in that
they allow the platform to modify some parameters while keeping others
purely constant.
Taken in isolation, they might create the superficial impression
that all increases in moreishness are bad, and all increases in
$q$ and $\barv$ are good.
But we would expect most scenarios to be more complex, 
involving situations where it is impossible to change one of
the parameters of the content without changing others,
and where it is therefore hard to make absolute statements about the
effect of any one parameter without taking into account
the overall structure of the content manifold.

A basic example of a content manifold in which changes to $p$
are correlated with changes to other parameters is one in which 
the content manifold is again
parametrized by a single value $z$, with
the point on the content manifold corresponding to $z$ having
$$p = z; \hspace*{0.15in} q = q_0; \hspace*{0.15in} \barv = v_0 + \alpha z$$
for $z \in [0,1)$ and constants $q_0 > 0$, $v_0 > W$, and $\alpha > 0$.
On this content manifold, modifications to content produce greater moreishness
but also greater expected value per item (with some potentially modest
slope $\alpha > 0$).
Such a content manifold is familiar from everyday experience, where 
content that engages system 1 might well also be more enjoyable for
system 2.

As $z$ increases, the engagement and utility vary as shown in
the right panel of Figure \ref{fig:consumption_surplus}:
for small values of $z$ the behavior is like Example~\ref{ex:quality}, with
engagement and utility growing together,
while beyond a certain point, the behavior is like Example~\ref{ex:moreishness},
with engagement growing while utility drops, until we get to
a value of $z$ large enough that the user chooses not to participate,
and engagement discontinuously drops to 0.
Formally, we can show the following fact about this example:
\begin{proposition}
  \label{prop:ex3}
The value of $z$
that maximizes utility is strictly between $0$ and $1$,
while the value of $z$ that maximizes engagement produces 0 utility
for the user.
\end{proposition}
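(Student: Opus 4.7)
The plan is to verify the two claims separately. For the engagement claim, I will reuse the monotonicity trick from the proof of Proposition~\ref{prop:ex2}: show that $\E{T}$ is strictly increasing in $z$ throughout the participation region, so its maximum must be attained at the boundary where $\E{S}=0$. For the utility claim, I will compute $\E{S}'(z)$ and compare its sign at the two endpoints of the participation region to locate the maximizer in the interior.

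For the engagement claim, observe that inside the participation region $\{z : \E{S}(z)\ge 0\}$ the engagement equals $\tfrac{1}{1-q_0}+\tfrac{z}{1-z}$, which is strictly increasing in $z$; outside, it is $0$. Since $\E{S}(0)=(v_0-W)/(1-q_0)>0$ by the assumption $v_0>W$, and $\E{S}(z)\to-\infty$ as $z\to 1^-$ (because the bounded first term cannot overcome the $-zW/(1-z)$ divergence), continuity gives a boundary point $z^{\dagger}\in(0,1)$ with $\E{S}(z^{\dagger})=0$. The maximum of $\E{T}$ over $[0,1)$ is therefore attained at $z^{\dagger}$, and at that $z^{\dagger}$ the user has zero utility.

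For the utility claim, I would differentiate to obtain
\[
\frac{d}{dz}\E{S}(z) \;=\; \frac{\alpha}{1-q_0}-\frac{W}{(1-z)^2},
\]
a strictly decreasing function of $z$. In the regime highlighted by the example---where the right panel of Figure~\ref{fig:consumption_surplus} shows utility initially rising with $z$---we have $\alpha>W(1-q_0)$, so $\E{S}'(0)>0$, ruling out $z=0$ as the maximizer. Combined with $\E{S}(z^{\dagger})=0<\E{S}(0)$, which rules out the boundary, the maximum must occur at the unique interior critical point $z^{\star}=1-\sqrt{W(1-q_0)/\alpha}$, which lies in $(0,z^{\dagger})\subset(0,1)$.

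The main obstacle is being honest about the parameter regime. If $\alpha\le W(1-q_0)$ then $\E{S}$ is nonincreasing on the participation region and is maximized at $z=0$, contradicting the ``strictly between $0$ and $1$'' conclusion. I would therefore open the proof by making the assumption $\alpha>W(1-q_0)$ explicit: this is the natural reading of the example, which is constructed precisely to illustrate a content manifold on which a modest positive slope in $\barv$ initially compensates for growing moreishness but is eventually overwhelmed by it.
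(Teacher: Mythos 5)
Your proposal is correct and follows essentially the same route as the paper's own proof: the same derivative computation yielding the interior critical point $z_S = 1-\sqrt{W(1-q_0)/\alpha}$, and the same monotonicity-of-$\E{T}$ argument forcing the engagement maximizer to the boundary of the participation region where $\E{S}=0$. Your observation about the parameter regime is well taken---the paper's own proof writes $z_S = \max\p{0, 1-\sqrt{W(1-q_0)/\alpha}}$, which is $0$ when $\alpha \le W(1-q_0)$, so the ``strictly between $0$ and $1$'' clause does implicitly require $\alpha > W(1-q_0)$ as you make explicit.
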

\begin{proof}
  Let $z_S$ and $z_T$ be the values of $z$ that maximize $\E{S}$ and $\E{T}$
  respectively.
  In particular, we will show that
  \begin{align*}
    z_S
    &= \begin{cases}
      1-\sqrt{\frac{W(1-q)}{\alpha}} & \alpha \ge W(1-q) \\
      0 & \text{otherwise}
    \end{cases} \\
    z_T
    &= \frac{\p{\alpha - v_0 + Wq} + \sqrt{\p{\alpha - v_0 + Wq}^2 - 4 \alpha
    \p{W - v_0}}}{2 \alpha}.
  \end{align*}
  Assuming the agent uses the platform at all, their utility as a function of $z$
  is given by
  \begin{align*}
    \E{S}
    &= \frac{v_0 + \alpha z - W}{1-q} -
    \frac{zW}{1-z}.
  \end{align*}
  The derivative is
  \begin{align*}
    \frac{d}{dz} \frac{v_0 + \alpha z - W}{1-q} -
    \frac{zW}{1-z}
    &= \frac{\alpha}{1-q} - \frac{W}{(1-z)^2}.
  \end{align*}
  Note that the second derivative is negative, so $\E{S}$ is maximized when its
  derivative is 0 if such a point exists.
  \begin{align*}
    \frac{\alpha}{1-q} - \frac{W}{(1-z)^2}
    &= 0 \\
    \frac{\alpha}{1-q}
    &= \frac{W}{(1-z)^2} \\
    (1-z)^2 &= \frac{W(1-q)}{\alpha} \\
    z &= 1-\sqrt{\frac{W(1-q)}{\alpha}}
  \end{align*}
  Of course, this might be negative, in which case $\E{S}$ is maximized at $z =
  0$. Thus, the value of $z$ that maximizes utility is
  \begin{align*}
    z_S \triangleq \max\p{0, 1-\sqrt{\frac{W(1-q)}{\alpha}}}.
  \end{align*}

  As long as $\E{S} > 0$, consumption is monotonically increasing in $z$,
  meaning that as before, $z_T \triangleq \max \left\{z : \frac{v_0 + \alpha
  z-W}{1-q} - \frac{zW}{1-z} \ge 0\right\}$. Because $\frac{v_0 + \alpha
  z-W}{1-q} - \frac{zW}{1-z} < 0$
  for sufficiently large $z$, it suffices to find $z_T = \max \left\{z :
  \frac{v_0 + \alpha z-W}{1-q} - \frac{zW}{1-z} = 0\right\}$.
  \begin{align*}
    \frac{v_0 + \alpha z - W}{1-q} -
    \frac{Wz}{1-z}
    &= 0 \\
    z \p{\frac{\alpha}{1-q} - \frac{W}{1-z}}
    &= \frac{W - v_0}{1-q} \\
    z \b{\p{\frac{\alpha}{1-q}}(1-z) - W + \frac{W -
    v_0}{1-q}}
    &= \frac{W - v_0}{1-q} \\
    z^2 \frac{\alpha}{1-q} - z \p{\frac{\alpha}{1-q} - W +
    \frac{W - v_0}{1-q}} + \frac{W - v_0}{1-q} &= 0 \\
    z^2 \alpha - z \p{\alpha - W(1-q) + (W - v_0)} + (W - v_0) &= 0 \\
    z^2 \alpha - z \p{\alpha - v_0 + Wq} + (W - v_0) &= 0 \\
  \end{align*}
  Because $\E{T}$ is concave in $z$ and is strictly positive for $z = 0$, we
  want the larger of the two quadratic roots.
  \begin{align*}
    z_T
    &= \frac{\p{\alpha - v_0 + Wq} + \sqrt{\p{\alpha - v_0 + Wq}^2 - 4 \alpha
    \p{W - v_0}}}{2 \alpha}
  \end{align*}
  To show that $z_T > z_S$, it suffices to show that $\E{S} = 0$ for
  all $z \ge z_T$, since $\E{S} = 0$ for $z = z_T$. This holds because for $z
  \le z_T$, $\E{S}$ is concave over
  $z$ and $\E{S} > 0$ for $z = 0$.
\end{proof}

This example, despite its simple structure, illustrates
some of the complex phenomena that arise when platforms optimize
over an underlying content manifold.
It is too simplistic to say that increasing engagement is always
a good heuristic for increasing utility, although this is the case in 
Example~\ref{ex:quality}.
It is also too simplistic to say that increasing moreishness is
always bad, although this is the case in Example~\ref{ex:moreishness}.
Rather, engagement can be a good signal for utility
over some parts of the content manifold and a bad signal over other parts,
and the challenge for a platform is to understand the structure
of the content manifold well enough to know where these different effects apply.

\subsection{Mixtures of content sources}

One natural way to generate a content manifold is through a mixture of $k$
individual content sources, where each content source might be a particular
genre of content like sports highlights or science videos. A platform could
choose a user's feed as a weighted mixture between these sources: at time $t$, a
piece of content from source $i$ is chosen with probability $a_i$. Suppose each
content source has parameters $(p_i, q_i, \barv_i)$. Then, the following
proposition shows that this mixture yields the weighted average of the
parameters of each content source. Thus, given $k$ content sources, the platform
can achieve any convex combination of their parameters simply by mixing them
together.

\begin{proposition}
  \label{prop:mixture}
  If content with parameters $(p_i, q_i, \barv_i)$ is chosen with probability
  $a_i$ independently at each step $t$, then the resulting content distribution
  has parameters $\p{\sum_i a_i p_i, \sum_i a_i q_i, \sum_i a_i \barv_i}$.
\end{proposition}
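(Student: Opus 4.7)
The plan is to verify the claim parameter by parameter, leveraging that the parameters $(p, q, \barv)$ are each defined by per-step marginal quantities: $p$ is the probability that $u_t > 0$, $\barv$ is the mean of the content value $v_t$, and $q$ is the per-step continuation probability that governs the Bernoulli factor in $I_t$. Since the mixture samples a source independently at each step $t$, each step is governed by the same marginal distribution over $(u_t, v_t, I_t\text{-increment})$, so it suffices to compute these three marginals.

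First I would handle $\barv$: conditioning on which source is selected at step $t$ and applying the law of total expectation gives $\E{v_t} = \sum_i a_i \E{v_t \mid \text{source } i} = \sum_i a_i \barv_i$. Next, for $p$, the same conditioning argument applied to the indicator of the event $\{u_t > 0\}$ yields $\Pr(u_t > 0) = \sum_i a_i \Pr(u_t > 0 \mid \text{source } i) = \sum_i a_i p_i$. Finally, for $q$, I would recall from Section~\ref{sec:model} that $q$ is the per-step probability that system 2 wants to continue (equivalently, the bias of the Bernoulli factor in $I_t$); again conditioning on the selected source at step $t$ gives continuation probability $\sum_i a_i q_i$.

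The only subtle point is that the mixture must preserve the i.i.d.\ structure that the agent's utility analysis in Equations~(\ref{eq:utility}) and~(\ref{eq:engagement}) relies on. This is immediate from the hypothesis that the source is drawn independently at each step $t$: the resulting sequence of triples $(u_t, v_t, I_t)$ is i.i.d.\ across $t$, with the three marginal parameters computed above. I would close by noting that no further argument is needed, since the user's behavior and the formulas for $\E{S}$ and $\E{T}$ depend on the content distribution only through $(p, q, \barv)$.

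I do not expect a genuine obstacle here; the proof is essentially a bookkeeping exercise in total probability and expectation. The one place to be careful is simply to state explicitly that the independence across $t$ in the mixture is what makes the resulting process equivalent (for the purposes of the user's decision problem) to a single content source with the averaged parameters, rather than, say, a process with time-correlated parameters that happens to have the same one-step marginals.
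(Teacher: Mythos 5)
Your proposal is correct and follows essentially the same route as the paper's proof: compute each per-step marginal parameter by conditioning on the selected source (law of total probability/expectation), and then observe that the agent's forward-looking behavior depends on the content stream only through these expected per-step characteristics, which the independent per-step sampling preserves. The paper phrases the latter point slightly differently---noting that the agent's decision at time $t$ is made before observing the content drawn at time $t$, so only expectations matter---but this is the same idea as your closing remark.
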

\begin{proof}
  The key observation here is that the agent's behavior at time $t$ does not
  depend on the content randomly selected at time $t$, since it only observes
  this content after it makes any decisions at time $t$. The agent's belief
  about the future, and therefore its behavior, depends only on the expected
  characteristics of that content. At time $t$, the probability that system 1
  will be active at time $t+1$ is $\sum_i a_i p_i$. Recall that $I_t$
  denotes the event that System 2 is still deriving value from the platform at
  time $t$. Then, $\Pr[I_{t+1} = 1 \given I_t = 1] = \sum_i a_i q_i$.
  Finally, $\E{v_t} = \sum_i a_i \barv_i$. Therefore, the agent behaves as if
  the parameters are $\p{\sum_i a_i p_i, \sum_i a_i q_i, \sum_i a_i \barv_i}$.
\end{proof}

While our discussion so far has been framed in terms of choices over content, we
can also think of certain design decisions in the context of our model. For
example, some platforms allow users to insert break reminders into their feeds
as a way to manage their time on the platform. In our model, we can think of a
break as a piece of content with low moreishness, low value, and high span
(because it is unlikely to satisfy a user's true desire for content). We can use
Proposition~\ref{prop:mixture} to anticipate how inserting breaks into a content
feed might alter the overall distribution's characteristics in $(p, q, \barv)$
space: breaks will decrease both the moreishness and value of the content. If
the existing content already has low moreishness, this may have little impact on
the user's behavior and utility; but if existing content has high moreishness,
the addition of breaks might significantly reduce their time on the platform
while increasing their overall utility. We discuss this in further detail
in Section~\ref{sec:UI}.

\section{Optimization on Content Manifolds: Structural Characterization}
\label{sec:opt}
Here, we develop the model of content manifolds 
at a general level, abstracting from these specific examples
and their properties.
As part of this, we provide two characterization theorems
that link the structure of the content manifold to the outcome of
the platform's optimization.

Platforms often use techniques like A/B testing and machine learning to optimize
engagement as a function of features $x$ from some domain $\mc X$. This is a
canonical approach to content curation: a platform will extract features from
content and try to predict how a user will engage with that content. In our
model, each such $x$ is associated with parameters $(p, q, \barv)$, unknown to
the platform, that dictate behavior. The platform attempts to maximize
engagement, optimizing $\E{T}$ over $\mc X$.

More formally, the underlying parameters $(p, q, \barv)$ lie in the
3-dimensional space $\Omega \triangleq [0, 1) \times [0, 1) \times
\mathbb{R}^+$. If each $x \in \mc X$ corresponds to some $\omega \in \Omega$,
then $\mc X$ has a corresponding \textit{content manifold}, which we denote $\mc
M$, over $\Omega$. Formally, if we define $f_{\mc X}$ to be the mapping from
$\mc X$ to $\Omega$, then $\mc X$ induces the content manifold $\mc M \triangleq
\{\omega : \exists x \in \mc X \text{ s.t. } f_{\mc X}(x) = \omega\}$. We will
assume $\mc M$ is a closed set.

In what follows, we will take a content manifold $\mc M$ as given, though as a concrete
example, we can think of $\mc M$ as generated by a mixture of content sources as
in Proposition~\ref{prop:mixture}; we will return to the relationship between
$\mc X$ and $\mc M$ later. Examples~\ref{ex:moreishness} and~\ref{ex:both} from
the previous section can be written as the content manifolds $\mc M_2 = \{(p, q, \barv)
: p \in [0, 1)\}$ (for fixed $q$ and $\barv$) and $\mc M_3 = \{(p, q, \barv) : p
\in [0, 1), \barv= v_0 + \alpha p\}$ (for fixed $q$) respectively. To gain some
intuition for how utility and engagement behave, we can visualize both $\E{S}$
and $\E{T}$ over various content manifolds. For example,
Figures~\ref{fig:c_s_heatmap_fixed_v} and~\ref{fig:c_s_heatmpap_vp_vm_log}
depict $\E{S}$ and $\E{T}$ for the content manifolds $\mc M_2$ and $\mc M_3$
instantiated with different values of $q$.

\begin{figure}[ht]
  \centering
  \includegraphics[width=0.9\linewidth]{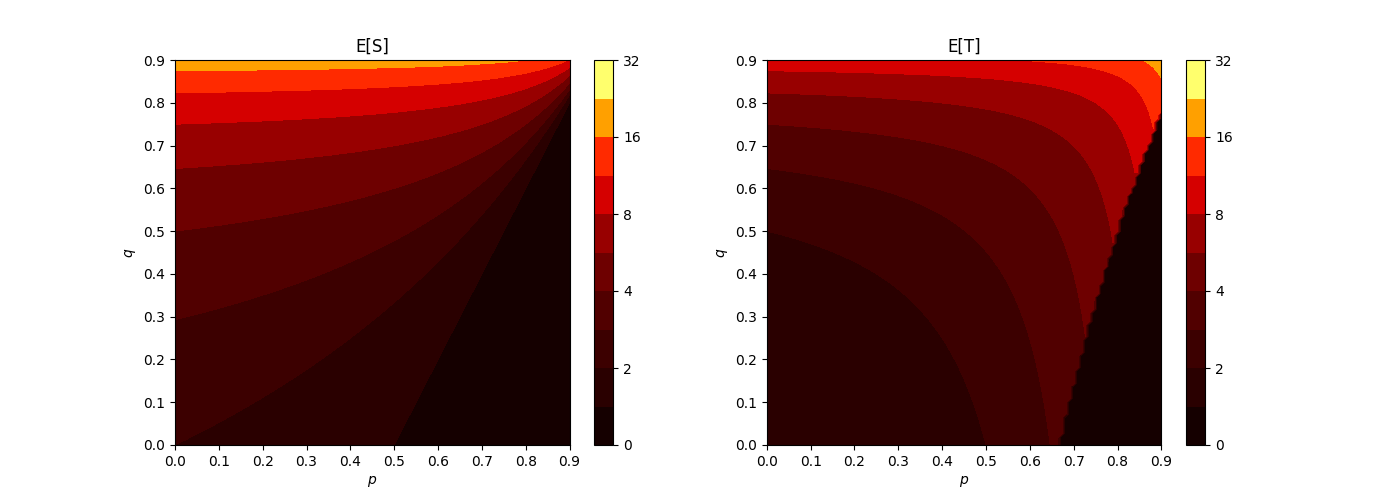}
  \caption{$\E{S}$ and $\E{T}$ plotted for the content manifold $\mc M_2$ from
  Example~\ref{ex:moreishness} with $\barv = 3$.}%
  \label{fig:c_s_heatmap_fixed_v}
\end{figure}

\begin{figure}[ht]
  \centering
  \includegraphics[width=0.9\linewidth]{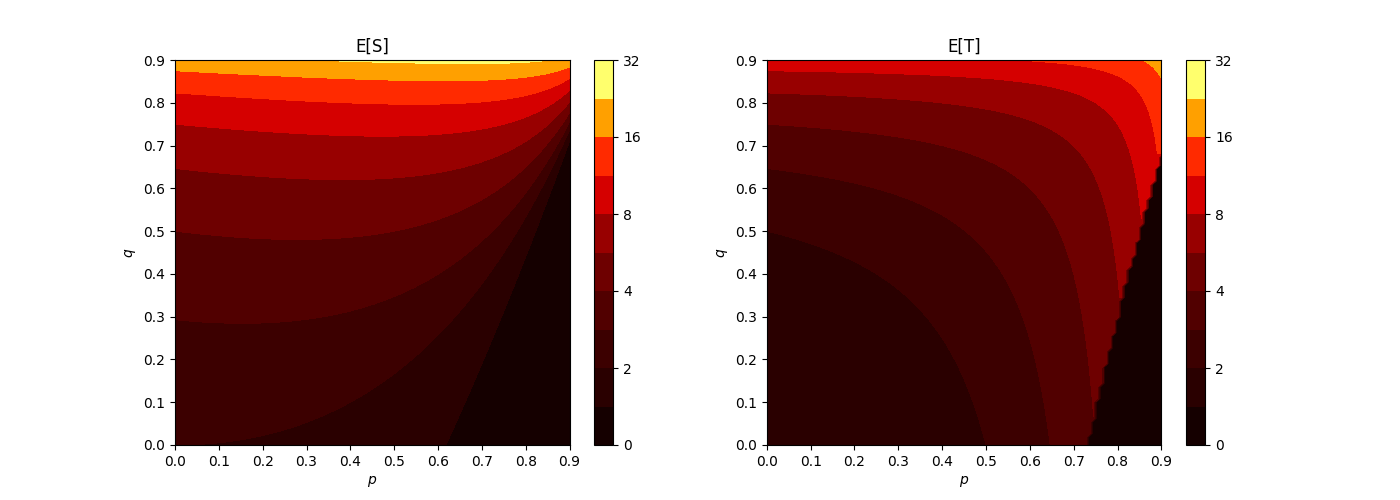}
  \caption{$\E{S}$ and $\E{T}$ plotted for the content manifold $\mc M_3$ from
  Example~\ref{ex:both} with $v_0 = 3$ and $\alpha = 1$.}%
  \label{fig:c_s_heatmpap_vp_vm_log}
\end{figure}

Our aim here is to reason about the impacts of a platform choosing content
within a particular content manifold $\mc M$. Let $\omega_S, \omega_T \in \mc M$ be the
parameters that maximize $\E{S}$ and $\E{T}$ respectively. Ideally, a platform
would like to choose content with parameters $\omega_S$; however, because by
assumption they
only observe signals about engagement, they optimize for $T$ and choose
$\omega_T$ instead. In Example~\ref{ex:quality}, this was a good strategy:
maximizing $\E{T}$ led to maximal $\E{S}$. But as we saw in
Examples~\ref{ex:moreishness} and~\ref{ex:both}, there is no guarantee that
leads to content with high utility for the user. Why is this the case, and when
is engagement a good proxy for utility?

\subsection{Characterizing the misalignment between utility and engagement}

We will show that there are two distinct reasons why utility and
engagement may be maximized at different points in a content manifold:
\begin{enumerate}
  \item Engagement-maximizing content has higher moreishness than
    utility-maximizing content.
  \item Engagement-maximizing content has higher span but lower value than
    utility-maximizing content.
\end{enumerate}
These are both intuitive reasons why maximizing engagement may not lead to
maximal user utility. Content may be more engaging because it keeps the user
``hooked,'' or it may be more engaging because it takes longer for users to
accomplish what they came to do (like a long series of instructional videos
where a simple, short one would have sufficed).

In order to make this intuition precise, we need the following definitions.
For $\omega = (p, q, \vbar)$, define
\begin{align*}
  g_S(\omega) &\triangleq \frac{\vbar - W}{1 - q} - \frac{pW}{1 - p} \\
  g_T(\omega) &\triangleq \frac{1}{1 - q} + \frac{p}{1 - p}
\end{align*}
Observe that
\begin{align}
  \omega_S &= \argmax_{\omega \in \mc M} g_S(\omega) \label{eq:wS_def} \\
  \omega_T &= \argmax_{\omega \in \mc M} g_T(\omega) ~ ~ ~ \text{s.t.} ~
  g_S(\omega) \ge 0 \label{eq:wT_def}
\end{align}
Assuming ties are broken consistently, these are well-defined because $\mc M$ is
closed. We'll assume that $\mc M$ contains at least one point with strictly
positive $\E{S}$. Let $\omega_S = (p_S, q_S, \vbar_S)$ and $\omega_T = (p_T,
q_T, \vbar_T)$. Then, the following theorem characterizes the two reasons why
utility and engagement can be maximized by different points on the content
manifold.

\def\pT{p_T}
\def\qT{q_T}
\def\vT{\vbar_T}
\def\pS{p_S}
\def\qS{q_S}
\def\vS{\vbar_S}

\begin{theorem}
  \label{thm:two_conds}
  For a given content manifold $\mc M$, suppose $\omega_S \ne \omega_T$, and
  that the disagreement is strict: $g_S(\omega_S) > g_S(\omega_T)$ and
  $g_T(\omega_T) > g_T(\omega_S)$.
  One of the following conditions must hold:
  \begin{enumerate}
    \item $\pT > \pS$ ($\omega_T$ has higher moreishness)
      \label{item:high_p}
    \item $\qT > \qS$ and $\vT < \vS$
      ($\omega_T$ has higher span but lower value) \label{item:high_q_low_v}
  \end{enumerate}
\end{theorem}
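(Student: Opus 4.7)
The plan is to argue by contrapositive: if neither (\ref{item:high_p}) nor (\ref{item:high_q_low_v}) holds, I will show that one of the strict inequalities $g_S(\omega_S) > g_S(\omega_T)$ or $g_T(\omega_T) > g_T(\omega_S)$ must fail. The negation of (\ref{item:high_p}) is $\pT \le \pS$, and the negation of (\ref{item:high_q_low_v}) is $\qT \le \qS$ or $\vT \ge \vS$. So I assume $\pT \le \pS$ together with the disjunction ``$\qT \le \qS$ or $\vT \ge \vS$,'' and case-split on it.

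The argument rests on the coordinatewise monotonicities of the two objectives, which I would record upfront. The engagement objective $g_T(p,q,\barv) = 1/(1-q) + p/(1-p)$ is strictly increasing in $p$ and in $q$, and independent of $\barv$. The utility objective $g_S(p,q,\barv) = (\barv-W)/(1-q) - pW/(1-p)$ is strictly decreasing in $p$, strictly increasing in $\barv$, and its partial with respect to $q$ has the sign of $\barv - W$. Also, one preliminary step deserves its own line: $\vS > W$. Since by hypothesis $\mc M$ contains a point with strictly positive $g_S$ and $\omega_S$ maximizes $g_S$, we have $g_S(\omega_S) > 0$; combined with $\pS W/(1-\pS) \ge 0$ this forces $(\vS - W)/(1-\qS) > 0$.

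Case A is $\pT \le \pS$ and $\qT \le \qS$: the coordinatewise monotonicity of $g_T$ gives $g_T(\omega_T) \le g_T(\omega_S)$, contradicting strict disagreement. Case B is $\pT \le \pS$, $\qT > \qS$, and $\vT \ge \vS$. Here, using $\vT \ge \vS > W$ together with $1 - \qT < 1 - \qS$, I would deduce the strict inequality $(\vT - W)/(1-\qT) > (\vS - W)/(1-\qS)$; meanwhile $\pT \le \pS$ gives $-\pT W/(1-\pT) \ge -\pS W/(1-\pS)$. Summing these two comparisons yields $g_S(\omega_T) > g_S(\omega_S)$, again contradicting strict disagreement. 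The only real subtlety is the preliminary step $\vS > W$ (which then propagates to $\vT > W$ in Case B); without it the comparison of the $q$-terms in Case B need not be strict in the required direction. The rest is coordinatewise bookkeeping.
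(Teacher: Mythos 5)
Your proof is correct and is essentially the contrapositive reorganization of the paper's argument: both rest on the same coordinatewise monotonicity comparisons, with your Case A matching the paper's derivation of $\qT > \qS$ from the $g_T$ inequality under $\pT \le \pS$, and your Case B matching its derivation of $\vT < \vS$ from the $g_S$ inequality. One small point in your favor: you explicitly justify the sign condition $\vS > W$ (from $g_S(\omega_S) > 0$) needed to make the $q$-term comparison strict, whereas the paper's proof silently uses the analogous fact $\vT \ge W$ (which follows from the constraint $g_S(\omega_T) \ge 0$) when it replaces $1-\qT$ by $1-\qS$ in the denominator.
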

\begin{proof}
  To prove the theorem, it suffices to show that when
  Condition~\ref{item:high_p} doesn't hold, Condition~\ref{item:high_q_low_v}
  does. Assume that $\pT \le \pS$. First we will show that this
  implies $\qT > \qS$.
  To do so, observe that
  \begin{align*}
    g_T(\omega_T)
    &> g_T(\omega_S) \\
    \frac{1}{1-\qT} + \frac{1}{1-\pT} - 1
    &> \frac{1}{1-\qS} + \frac{1}{1-\pS} - 1 \\
    \frac{1}{1-\qT}
    &> \frac{1}{1-\qS} \tag{$\pT \le \pS$} \\
    \qT
    &> \qS
  \end{align*}
  Next, to show that $\vT < \vS$, we have
  \begin{align*}
    g_S(\omega_S)
    &> g_S(\omega_T) \\
    \frac{\vS - W}{1-\qS} - \frac{W}{1-\pS} + W
    &> \frac{\vT - W}{1-\qT} - \frac{W}{1-\pT} + W \\
    \frac{\vS - W}{1-\qS}
    &> \frac{\vT - W}{1-\qT} \tag{$\pT \le \pS$} \\
    \frac{\vS - W}{1-\qS}
    &> \frac{\vT - W}{1-\qS} \tag{$\qT > \qS$} \\
    \vS
    &> \vT
  \end{align*}
  This proves the desired claim: whenever $\pT \le \pS$,
  $\qT > \qS$ and $\vT < \vS$.
\end{proof}

\def\wsrp{(\omega_S + c\vr)[p]}
\def\wsrq{(\omega_S + c\vr)[q]}
\def\wsrv{(\omega_S + c\vr)[\barv]}

\subsection{Content manifolds for which utility and engagement are aligned}

Given this general misalignment between utility and engagement, we might
rightly ask: why would the platform try to optimize for engagement in the first
place? In fact, there are natural assumptions under which this is a fairly
reasonable thing to do. But crucially, these assumptions are on the structure of
the content manifold---in other words, whether or not engagement-maximization is a good
strategy to improve user welfare depends on our beliefs about the shape of the
content manifold. One example of such assumptions is the following:
\begin{enumerate}
  \item No content is very moreish.
  \item All content has roughly equal value.
\end{enumerate}
If these are both true, then maximizing engagement leads to near-optimal
user utility. In other words, under these conditions, a content manifold is
``salad''-like. We can formalize this claim as follows.

\begin{theorem}
  \label{thm:utility_close}
  Suppose that a content manifold $\mc M$ satisfies the following conditions:
  \begin{enumerate}
    \item $p \le \alpha$ for all $(p, q, \barv) \in \mc M$
    \item For some $v_0$, $|\barv - v_0| < \beta$ for all $(p, q, \barv) \in \mc
      M$
  \end{enumerate}
  Then, the user's utility at $\omega_T$ is not much less than its utility at
  $\omega_S$:
  \begin{align*}
    \E{S(\omega_T)} \ge \E{S(\omega_S)}  - 2 \beta \E{T(\omega_S)} -
    \frac{\alpha(v_0+\beta)}{1-\alpha}.
  \end{align*}
  Thus, when $\alpha$ and $\beta$ are close to 0, the user's utility under
  engagement-maximization is near-optimal.
\end{theorem}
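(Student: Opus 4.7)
The plan is to use the engagement-optimality of $\omega_T$ together with the two structural caps ($p \le \alpha$ and $|\vbar - v_0| < \beta$) to show that $\omega_T$ cannot differ from $\omega_S$ by much in the quantities that actually matter for utility. Informally: because the moreishness contribution $\tfrac{p}{1-p}$ to $g_T$ is at most $\tfrac{\alpha}{1-\alpha}$, almost all of $\omega_T$'s engagement must come from its span, so $\qT$ cannot be much smaller than $\qS$; meanwhile the value window has width $2\beta$, so $\vT$ is within $2\beta$ of $\vS$. Feeding these two estimates into the formula for $\E{S(\omega_T)}$ should recover $\E{S(\omega_S)}$ up to the stated error.

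Before the main calculation I would dispose of a sign issue so that later inequalities can be multiplied through safely. Since $\mc M$ contains a point with strictly positive $\E{S}$ we have $\E{S(\omega_S)} \ge 0$, and by the feasibility constraint in the definition of $\omega_T$ also $\E{S(\omega_T)} \ge 0$. As $\tfrac{pW}{1-p} \ge 0$, both inequalities force $\vT, \vS \ge W$, which gives both $\vT - W \ge 0$ and $v_0 + \beta \ge W$.

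The core step is to extract the span-closeness estimate $\tfrac{1}{1-\qT} \ge \tfrac{1}{1-\qS} - \tfrac{\alpha}{1-\alpha}$ directly from $g_T(\omega_T) \ge g_T(\omega_S)$, by moving the moreishness terms to the other side and bounding them using $\tfrac{\pS}{1-\pS} \le \tfrac{\alpha}{1-\alpha}$ and $\tfrac{\pT}{1-\pT} \ge 0$. The value bound then gives $\vT - W \ge \vS - W - 2\beta$ and $\vT - W \le v_0 + \beta - W$. I would substitute all of these into
\begin{equation*}
  \E{S(\omega_T)} = (\vT - W) \cdot \frac{1}{1-\qT} - \frac{\pT W}{1-\pT},
\end{equation*}
using $\tfrac{\pT W}{1-\pT} \le \tfrac{\alpha W}{1-\alpha}$ for the penalty term and expanding $(\vT - W)\bigl(\tfrac{1}{1-\qS} - \tfrac{\alpha}{1-\alpha}\bigr)$ into two pieces bounded separately. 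The two $\alpha$-terms consolidate into a clean $\tfrac{(v_0+\beta)\alpha}{1-\alpha}$, since the $-W\alpha$ arising from $v_0+\beta-W$ cancels the $+W\alpha$ from the penalty bound. Finally, dropping the non-negative moreishness penalty at $\omega_S$ turns $\tfrac{\vS-W}{1-\qS}$ into $\E{S(\omega_S)}$, and replacing $\tfrac{1}{1-\qS}$ with the larger quantity $\E{T(\omega_S)}$ turns the $\beta$-term into $2\beta\E{T(\omega_S)}$, yielding the theorem.

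The main obstacle is not conceptual but ensuring the inequality chain compounds correctly at the product step $(\vT - W) \cdot \tfrac{1}{1-\qT}$, where both factors are being replaced by lower bounds. The non-negativity of $\vT - W$ (established in the preliminary step) is exactly what lets me multiply through the span-closeness bound without a case split on whether $\tfrac{1}{1-\qS} - \tfrac{\alpha}{1-\alpha}$ is positive or negative; analogously, the non-negativity of $v_0 + \beta - W$ is what validates the upper bound $(\vT - W)\tfrac{\alpha}{1-\alpha} \le (v_0+\beta-W)\tfrac{\alpha}{1-\alpha}$. Once these sign details are pinned down, the rest is routine algebra.
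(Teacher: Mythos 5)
Your proposal is correct and takes essentially the same route as the paper's proof: both arguments rest on the engagement comparison $g_T(\omega_T) \ge g_T(\omega_S)$ forcing $\frac{1}{1-q_S} - \frac{1}{1-q_T} \le \frac{\alpha}{1-\alpha}$, the width-$2\beta$ value window, and the $\frac{\alpha W}{1-\alpha}$ cap on the moreishness penalty, with only the bookkeeping arranged differently (you lower-bound $\E{S(\omega_T)}$ term by term where the paper upper-bounds the difference $\E{S(\omega_S)} - \E{S(\omega_T)}$ via the identity $b_S(\overline{v}_S - W) - b_T(\overline{v}_T - W) = b_S(\overline{v}_S - \overline{v}_T) + (b_S - b_T)(\overline{v}_T - W)$). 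One trivial transposition: deriving $\frac{1}{1-q_T} \ge \frac{1}{1-q_S} - \frac{\alpha}{1-\alpha}$ uses $\frac{p_T}{1-p_T} \le \frac{\alpha}{1-\alpha}$ and $\frac{p_S}{1-p_S} \ge 0$, not the subscripts as you wrote them; otherwise your sign analysis (in particular the explicit check that $\overline{v}_T \ge W$ from feasibility, which the paper leaves implicit) is exactly what makes the product step sound.
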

\begin{proof}
  Define
  \begin{align*}
    a_i &\triangleq \frac{1}{1-p_i} \tag{$i \in \{S, T\}$} \\
    b_i &\triangleq \frac{1}{1-q_i} \tag{$i \in \{S, T\}$}
  \end{align*}
  Because $\E{T(\omega_T)} > \E{T(\omega_S)}$,
  \begin{align*}
    \frac{1}{1-\pT} + \frac{1}{1-\qT} - 1
    &> \frac{1}{1-\pS} + \frac{1}{1-\qS} - 1 \\
    a_T + b_T
    &> a_S + b_S \\
    b_S - b_T
    &< a_T - a_S
    \numberthis \label{eq:T_comp}
  \end{align*}
  Note that by assumption,
  \begin{equation}
    \label{eq:vt_bound}
    \vT \le v_0 + \beta,
  \end{equation}
  and for $i \in \{S, T\}$,
  \begin{equation}
    \label{eq:a_bound}
    1 \le a_i \le \frac{1}{1-\alpha}.
  \end{equation}
  Using this,
  \begin{align*}
    \E{S(\omega_S)} - \E{S(\omega_T)}
    &= \frac{\vS - W}{1 - \qS} - \frac{W}{1-\pS}
    - \b{\frac{\vT - W}{1 - \qT} - \frac{W}{1-\pT}} \\
    &= b_S(\vS - W) - b_T(\vT - W) + W(a_T - a_S) \\
    &= b_S (\vS - \vT) + (b_S - b_T)(\vT - W) + W(a_T -
    a_S) \\
    &< b_S (\vS - \vT) + (a_T - a_S)(\vT - W) + W(a_T -
    a_S) \tag{by~\eqref{eq:T_comp}} \\
    &= b_S (\vS - \vT) + (a_T - a_S)\vT \\
    &\le b_S \cdot 2\beta + \frac{\alpha (v_0+\beta)}{1-\alpha}
    \tag{by~\eqref{eq:vt_bound} and~\eqref{eq:a_bound}} \\
    &\le 2\beta(a_S + b_S - 1) + \frac{\alpha (v_0+\beta)}{1-\alpha}  \\
    &= 2\beta \E{T(\omega_S)} + \frac{\alpha (v_0+\beta)}{1-\alpha} 
  \end{align*}
\end{proof}
Note that this gap must necessarily increase linearly with engagement $T$,
since the user can miss out on a constant utility per step if $\vT < \vS$.

\section{How Platforms might Learn the Type of Content Manifold They're On}
\label{sec:strategies}

At this point, a platform designer might rightly ask: if engagement-maximization
is a good strategy for some content manifolds but not others, how might they go
about determining what type of content manifold they have? In other words, how could
they distinguish a ``junk food'' manifold from a ``salad'' manifold, or restrict
themselves to a ``salad''-like portion of the manifold?
Theorem~\ref{thm:utility_close} characterizes content manifolds over in
$\Omega$, but in
reality, platforms optimize over some observable feature space $\mc X$ with no
access to $\Omega$. To some extent this is a platform-dependent activity, but
our model naturally suggests a few general strategies that can help tease apart
these different content manifold types. We describe three high-level approaches here:
user satisfaction surveys, value-driven data, and UI design choices. This list
is not meant to be exhaustive, but to provide examples of how a platform might
seek to better understand the content manifold(s) on which it operates through
extensions of this model.

\subsection{Surveys}

Recognizing that user utility can't be fully inferred from observational data,
many platforms run surveys to assess user satisfaction
(e.g.,~\citep{gesenhues2018facebook}). These surveys often ask users for
immediate feedback on a session or piece of content: how happy is the user
with how they just spent their time? In general, these surveys produce orders
of magnitude less data than the engagement data platforms collect through
observation, making it difficult to use survey outcomes as the sole measure of
how platform changes impact user satisfaction. Instead of using these data for
optimization, however, the platform could use survey data to determine whether
their content manifold is more ``junk food''-like or ``salad''-like. As a simple
example, they could consider the correlation between engagement and survey
outcomes: if changes that increase engagement tend to decrease utility, the
platform might need to reduce their reliance on engagement metrics. On the other
hand, if engagement and survey outcomes are positively correlated, then the
platform might assume that the content manifold they're assessing has relatively
well-aligned utility and engagement, allowing them to leverage the full power of
their behavioral data to optimize the platform.

Beyond a broad look at the correlation between engagement and measured utility,
our model suggests more sophisticated ways to use survey data to better
understand the characteristics of content. Session lengths are heterogeneous
across users and time; do longer sessions lead to higher user utility? And what
should this tell us about overall welfare? Concretely, we can frame this
within our model as follows by assuming that a post-session survey provides a
(perhaps noisy) estimate of utility for that session. Then, we can ask the
following: does $\E{S \given T = t}$ increase with $t$? The following theorem
shows how a platform might use the empirical relationship between utility and
session length to better understand its content.

\begin{theorem}
  \label{thm:length_surplus}
  There are three regimes of interest:
  \begin{enumerate}
    \item $q = p$. Then, $\E{S \given T = t}$ increases with $t$ if and only if
      $\barv > 2W$.
    \item $q > p$. Then,
      \begin{enumerate}
        \item If $\barv > 2W$, $\E{S \given T = t}$ increases with $t$ for all $t$.
        \item There exists $t^*$ such that $\E{S \given T = t}$ increases with
          $t$ for $t > t^*$.
      \end{enumerate}
    \item $q < p$. Then,
      \begin{enumerate}
        \item If $\barv < 2W$, $\E{S \given T = t}$ decreases with $t$ for all $t$.
        \item There exists $t^*$ such that $\E{S \given T = t}$ decreases with
          $t$ for $t > t^*$.
      \end{enumerate}
  \end{enumerate}
\end{theorem}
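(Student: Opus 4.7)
The plan is to reduce all three regimes to a single discrete derivative and then compare it with $W/\barv$. Recall that a session splits into phase~1 of length $X_1$ (geometric on $\{1,2,\ldots\}$ with continuation probability $q$) and phase~2 of length $X_2$ (geometric on $\{0,1,\ldots\}$ with continuation probability $p$), independent of each other. Since system~2 earns $\barv - W$ per phase-1 step and $-W$ per phase-2 step, the per-session utility is $\barv X_1 - W T$ in expectation, so $\E{S \given T = t} = \barv\,\E{X_1 \given T = t} - W t$, and the question reduces to whether $\barv\bigl(\E{X_1 \given T = t+1} - \E{X_1 \given T = t}\bigr) > W$.

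Next I would observe that $\Pr[X_1 = k,\, X_2 = t-k]$ is proportional to $(q/p)^{k-1}$ on $1 \le k \le t$, so, writing $r = q/p$, the conditional law of $X_1$ given $T = t$ is a geometric distribution with ratio $r$ truncated to $\{1,\ldots,t\}$. Let $f(t) \triangleq \E{X_1 \given T = t}$. Passing from $T = t$ to $T = t+1$ only attaches one new support point at $k = t+1$ with weight proportional to $r^t$; conditioning on whether $X_1 = t+1$ or not gives $f(t+1) = (1 - w_t)\, f(t) + (t+1)\, w_t$, where $w_t = r^t / (1 + r + \cdots + r^t)$, and hence $\Delta(t) \triangleq f(t+1) - f(t) = w_t \bigl[(t+1) - f(t)\bigr]$.

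The central claim, and the main technical step, is that for every $t \ge 1$ the sign of $\Delta(t) - 1/2$ equals the sign of $r - 1$. Clearing denominators, $\Delta(t) \ge 1/2$ is equivalent to $2 r^t \sum_{k=1}^t (t+1-k)\, r^{k-1} \ge B(t)\, B(t+1)$ with $B(t) = 1 + r + \cdots + r^{t-1}$. Expanding both sides as polynomials in $r$, the coefficient of $r^m$ on the right is $m+1$ for $m \le t-1$ and $2t-m$ for $t \le m \le 2t-1$, while the left contributes only to degrees $m \ge t$, with coefficient $2(2t-m)$ there. Subtracting and pairing the term $(t-j)\, r^{t+j}$ from the left with the coefficient-matching term $(t-j)\, r^{t-1-j}$ from the right (for $j = 0, \ldots, t-1$), each pair collapses to $(t-j)\, r^{t-1-j}\,(r^{2j+1} - 1)$, whose sign is that of $r - 1$. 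I expect this bookkeeping---verifying that every coefficient pair matches and nothing is left over---to be the main obstacle. Complementarily, I would record the asymptotics $\Delta(t) \to 1$ for $r > 1$ (the truncated geometric concentrates at its upper endpoint) and $\Delta(t) \to 0$ for $r < 1$ (mass stays near $k = 1$), both of which follow directly from the closed form for $f(t)$.

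Finally I would assemble the three cases using $\E{S \given T=t+1} - \E{S \given T=t} = \barv \Delta(t) - W$. When $q = p$ ($r = 1$) the update identity collapses to $f(t) = (t+1)/2$ and $\Delta(t) \equiv 1/2$, so the sign of $\barv/2 - W$ gives the iff with $\barv > 2W$. When $q > p$, the central claim gives $\Delta(t) > 1/2$ for every $t$, so $\barv > 2W$ forces monotone increase, establishing (a); and, under the standing participation hypothesis $\barv > W$ we have $W/\barv < 1$, so $\Delta(t) \to 1$ produces a cutoff $t^*$ beyond which $\Delta(t) > W/\barv$, establishing (b). The case $q < p$ is symmetric: $\Delta(t) < 1/2 < W/\barv$ whenever $\barv < 2W$ yields always-strict decrease, while $\Delta(t) \to 0$ combined with $W/\barv > 0$ yields eventual decrease for any $\barv$.
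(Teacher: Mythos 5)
Your proposal is correct, and while it shares the paper's skeleton, it replaces the paper's key technical lemma with a genuinely different and more elementary argument. Both proofs start identically: decompose the session into the phase-1 length ($T_q$ in the paper, your $X_1$) and the phase-2 remainder, observe that $\E{S \given T = t} = \barv\,\E{X_1 \given T = t} - Wt$, and note that conditioned on $T = t$ the phase-1 length is a geometric distribution with ratio $\beta = q/p$ truncated to $\{1,\ldots,t\}$. From there the paper computes the closed form $\E{T_q \given T = t} = \frac{t\beta^t}{\beta^t-1} - \frac{1}{\beta-1}$, treats $t$ as a continuous variable, differentiates, and substitutes $x = \beta^t - 1$ to reduce everything to the transcendental inequality $\frac{x+1}{x}\bigl(1 - \frac{\ln(x+1)}{x}\bigr) \gtrless \frac{1}{2}$, which it establishes in Lemma~\ref{lem:half_lb} by two further rounds of differentiation; the existence of $t^*$ then comes from L'H\^opital limits at $x \to \infty$ and $x \to -1^+$. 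You instead work with the exact forward difference $\Delta(t) = f(t+1) - f(t) = w_t[(t+1)-f(t)]$ and prove the threshold claim $\sign(\Delta(t) - \tfrac12) = \sign(\beta - 1)$ by clearing denominators and pairing polynomial coefficients; I checked the bookkeeping and it closes exactly: the difference of the two sides collapses to $\sum_{j=0}^{t-1}(t-j)\,r^{t-1-j}(r^{2j+1}-1)$, every term of which carries the sign of $r-1$, with nothing left over. Your limits $\Delta(t)\to 1$ (for $\beta>1$) and $\Delta(t)\to 0$ (for $\beta<1$) match the paper's and the case assembly is the same. What your route buys is twofold: it avoids the mild awkwardness of differentiating a sequence in a continuous parameter (the theorem is a statement about integer $t$, so the discrete difference is the natural object), and it replaces a calculus lemma about $\ln(x+1)$ with an exact algebraic identity that also makes the strictness of the inequality for $\beta \ne 1$ immediate. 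The paper's route, in exchange, yields a clean closed form for $\E{T_q \given T=t}$ that is reusable elsewhere. Either way the argument is complete.
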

\begin{proof}
  Let $T_q$ be the first time where $I_t = 0$. First, observe that
  \begin{equation}
    \label{eq:S_tq}
    \E{S \given T = t} = \barv \cdot \E{T_q \given T = t} - Wt.
  \end{equation}
  To find $\E{T_q \given T = t}$, recall that $T_q \sim \text{Geom}(1-q)$ and $T -
  T_q + 1 \sim \text{Geom}(1-p)$. Let $T_p \triangleq T - T_q + 1$. Then,
  \begin{align*}
    \E{T_q \given T = t}
  &= \E{T_q \given T_q + T_p = t + 1} \\
  &= \sum_{\tau = 1}^t \tau \Pr[T_q = \tau \given T_q + T_p = t+1] \\
  &= \sum_{\tau = 1}^t \tau \frac{\Pr[T_q = \tau \cap T_p = t - \tau +
  1]}{\Pr[T_q + T_p = t+1]} \\
  &= \sum_{\tau = 1}^t \tau \frac{\Pr[T_q = \tau \cap T_p = t - \tau +
  1]}{\sum_{\tau' = 1}^{t} \Pr[T_q = \tau' \cap T_p = t - \tau' + 1]} \\
  &=  \frac{\sum_{\tau = 1}^t \tau\Pr[T_q = \tau \cap T_p = t - \tau +
  1]}{\sum_{\tau' = 1}^{t} \Pr[T_q = \tau' \cap T_p = t - \tau' + 1]}
  \numberthis \label{eq:Tq_Tp_frac}
  \end{align*}
  If $p = q$, then $\E{T_q \given T_q + T_p = t+1} = \frac{t+1}{2}$ by symmetry.
  Thus, we need only consider the case where $p \ne q$. For any $\tau \in \{1,
  \dots, t\}$,
  \begin{align*}
    \Pr[T_q = \tau \cap T_p = t - \tau + 1]
  &= q^{\tau-1}(1-q) p^{t - \tau} (1-p) \\
  &= (1-q)(1-p) p^{t+1} \p{\frac{q}{p}}^{\tau-1}
  \end{align*}
  Thus,
  \begin{align*}
    \sum_{\tau = 1}^t \Pr[T_q = \tau \cap T_p = t - \tau + 1]
  &= (1-q)(1-p) p^{t+1} \sum_{\tau=1}^t \p{\frac{q}{p}}^{\tau-1} \\
  &= (1-q)(1-p) p^{t+1} \sum_{\tau=0}^{t-1} \p{\frac{q}{p}}^{\tau} \\
  &= (1-q)(1-p) p^{t+1} \p{\frac{1 - (q/p)^t}{1 - q/p}} \\
  &= (1-q)(1-p) p^{t+1} \p{\frac{\beta^t - 1}{\beta - 1}} \tag{$\beta \triangleq
  q/p$}
  \end{align*}
  Similarly, defining $\beta \triangleq \frac{q}{p}$,
  \begin{align*}
    \sum_{\tau = 1}^t \tau \Pr[T_q = \tau \cap T_p = t - \tau + 1]
  &= (1-q)(1-p) p^{t+1} \cdot \sum_{\tau=1}^t \tau
  \p{\frac{q}{p}}^{\tau-1} \\
  &= (1-q)(1-p) p^{t+1} \cdot \frac{p}{q} \sum_{\tau=1}^t \tau
  \p{\frac{q}{p}}^{\tau} \\
  &= (1-q)(1-p) p^{t+1} \cdot \beta^{-1} \sum_{\tau=1}^t \tau
  \beta^\tau \\
  &= (1-q)(1-p) p^{t+1} \cdot \beta^{-1} \cdot \frac{\beta (t \beta^{t+1} -
  (t+1) \beta^t  + 1)}{(\beta-1)^2} \\
  &= (1-q)(1-p) p^{t+1} \cdot \frac{t \beta^{t+1} -
  (t+1) \beta^t  + 1}{(\beta-1)^2}
  \end{align*}
  Returning to~\eqref{eq:Tq_Tp_frac},
  \begin{align*}
    \E{T_q \given T_q + T_p = t + 1}
    &= \frac{\sum_{\tau = 1}^t \tau\Pr[T_q = \tau \cap T_p = t - \tau +
    1]}{\sum_{\tau' = 1}^{t} \Pr[T_q = \tau' \cap T_p = t - \tau' + 1]} \\
    &= \frac{(1-q)(1-p) p^{t+1} \cdot \frac{t \beta^{t+1} - (t+1) \beta^t  +
    1}{(\beta-1)^2}}{(1-q)(1-p) p^{t+1} \p{\frac{\beta^t - 1}{\beta - 1}}} \\
    &= \frac{t \beta^{t+1} - (t+1) \beta^t  + 1}{(\beta^t - 1)(\beta - 1)} \\
     &= \frac{t\beta^t(\beta - 1) - (\beta^t - 1)}{\p{\beta^t - 1}(\beta - 1)} \\
     &= \frac{t \beta^t}{\beta^t - 1} - \frac{1}{\beta - 1}
  \end{align*}
  With this, we can return to~\eqref{eq:S_tq}.
  \begin{align*}
    \E{S \given T = t}
    &= \barv \cdot \E{T_q \given T = t} - Wt \\
    &= \begin{cases}
      \barv \cdot \frac{t+1}{2} - W t & p = q \\
      \barv \p{\frac{t \beta^t}{\beta^t - 1} - \frac{1}{\beta - 1}} - W t & p \ne q
    \end{cases}
    \numberthis \label{eq:St_cases}
  \end{align*}

  \textbf{Case 1:} $p = q$. \\
  Then,~\eqref{eq:St_cases} increasing in $t$ if and only if
  \begin{align*}
    \frac{d}{dt} \barv \cdot \frac{t+1}{2} - Wt
    &> 0 \\
    \frac{\barv}{2} &> W
  \end{align*}
  This proves the theorem in this case.

  \textbf{Case 2:} $p > q$. \\
  For $p \ne q$,~\eqref{eq:St_cases} is increasing in $t$ if
  \begin{align*}
    \frac{d}{dt} \barv \p{\frac{t \beta^t}{\beta^t - 1} - \frac{1}{\beta - 1}} - W t
    &> 0 \\
    \frac{\beta^t (\beta^t - t \ln \beta - 1)}{(\beta^t - 1)^2}
    &> \frac{W}{\barv} \\
    \frac{\beta^t (\beta^t - 1)}{(\beta^t - 1)^2} - \frac{t \beta^t \ln
    \beta}{(\beta^t - 1)^2}
    &> \frac{W}{\barv} \\
    \frac{\beta^t}{\beta^t - 1} - \frac{t \beta^t \ln \beta}{(\beta^t - 1)^2}
    &> \frac{W}{\barv} \\
    \frac{\beta^t}{\beta^t - 1} \p{1 -  \frac{t \ln \beta}{\beta^t - 1}}
    &> \frac{W}{\barv} \\
  \end{align*}
  When $q > p$, $\beta > 1$. Let $x \triangleq \beta^t-1$. Then, $\E{S \given T
  = t}$ is increasing in $t$ if
  \begin{equation}
    \label{eq:q_ne_p_cond}
    \frac{x+1}{x} \p{1 - \frac{\ln (x+1)}{x}}
    > \frac{W}{\barv}.
  \end{equation}
  By Lemma~\ref{lem:half_lb}, for $x > 0$,
  \begin{align*}
    \frac{x+1}{x} \p{1 - \frac{\ln (x+1)}{x}}
    &\ge \frac{1}{2}.
  \end{align*}
  As long as $\barv > 2W$, $\E{S \given T = t}$ is increasing in $t$ as desired.
  Next, we must show that there exists $t^*$ such that $\E{S \given T = t}$ is
  increasing in $t$ for $t > t^*$. To do so, observe that $x = \beta^t - 1 \to
  \infty$ as $t \to \infty$. Thus, it suffices to show that in the
  limit,~\eqref{eq:q_ne_p_cond} holds.
  \begin{align*}
    \lim_{x \to \infty} \frac{x+1}{x} \p{1 - \frac{\ln (x+1)}{x}}
    &= \lim_{x \to \infty} \frac{x^2 + x - (x+1) \ln (x+1)}{x^2} \\
    &= \lim_{x \to \infty} \frac{2x + 1 - 1 - \ln(x+1)}{2x} \\
    &= \lim_{x \to \infty} \frac{2x - \ln(x+1)}{2x} \\
    &= \lim_{x \to \infty} \frac{2 - \frac{1}{x+1}}{2} \\
    &= 1 \\
    &> \frac{W}{\barv}
  \end{align*}
  since $\barv > W$ by assumption. Thus, there exists sufficiently large $x$ such
  that~\eqref{eq:q_ne_p_cond} holds, meaning the desired $t^*$ exists.

  \textbf{Case 3:} $q < p$. \\
  Finally, we consider the case where $q < p$, meaning $\beta < 1$. We can again
  use Lemma~\ref{lem:half_lb} with $x = \beta^t - 1 \in (-1, 0)$, meaning
  \begin{align*}
    \frac{x+1}{x} \p{1 - \frac{\ln (x+1)}{x}}
    &\le \frac{1}{2}.
  \end{align*}
  Thus, if $\barv < 2W$, $\E{S \given T = t}$ is decreasing in $t$. To show the
  existence of $t^*$ such that $\E{S \given T = t}$ is decreasing in $t$ for $t
  > t^*$, we use the fact that $x \to -1$ as $t \to \infty$. In the limit,
  \begin{align*}
    \lim_{x \to -1^+} \frac{x+1}{x} \p{1 - \frac{\ln (x+1)}{x}}
    &= \lim_{x \to -1^+} -\frac{(x+1) \ln(x+1)}{x^2} \\
    &= \lim_{x \to -1^+} -\frac{\ln(x+1)}{\frac{x^2}{x+1}} \\
    &= \lim_{x \to -1^+} -\frac{\frac{1}{x+1}}{\frac{x(x+2)}{(x+1)^2}} \\
    &= \lim_{x \to -1^+} -\frac{1}{\frac{x(x+2)}{x+1}} \\
    &= 0 \\
    &< \frac{W}{\barv}
  \end{align*}
  Therefore, for sufficiently large $t$ and $q < p$, $\E{S \given T = t}$ is
  decreasing in $t$.
\end{proof}
Thus, if the platform observes that user utility initially increases with
session length but starts to decrease after a while, it could conclude that its
content has high value but also high moreishness, and might take steps to try to
help users limit their engagement.

Another way a platform might seek to learn about its position on a content manifold
might be to directly ask users about how much unwanted time they spend on the
platform, sometimes termed ``regretful use''~\citep{cho2021reflect}. For
instance, suppose a platform surveyed users to determine how much time they want
to spend on the platform, and compare this to the actual amount of time they
spend. In our model, users want to consume $1/(1-q)$ pieces of content in
expectation, but end up consuming $1/(1-q) + p/(1-p)$ instead. The platform
could then view regretful use as a measure of its moreishness $p$. Note that
minimizing regretful use is not necessarily a useful objective for the
platform---trivially, the user has no regretful use when it avoids the platform
altogether, and for content manifolds like the one in Example~\ref{ex:both},
user
utility might be maximized at a different point from where regretful use is
minimized. But quantifying regretful use can still help a platform determine
whether their content has high or low moreishness, allowing them to make
decisions about engagement-maximization accordingly.

\subsection{Value-driven data}

Platforms typically collect a vast amount of behavioral data, from time on
platform to actions (likes, comments, etc.) to frequency of use. While our model
only considers one such behavior (amount of content consumed), in principle,
these different behaviors convey different amounts of information about user
utility. We might naturally try to use these different signals to infer utility
under concrete assumptions about the underlying relationship between utility and
particular signals~\citep{milli2021optimizing}.

As a simple example, the platform might look not just at the lengths of users'
sessions but also at the number of users on the platform. Intuitively, if users
have long sessions, but fewer of them want to use the platform at all, this
might be an indication that the platform has high moreishness and provides low
utility to users. We can capture this nuance by considering user heterogeneity,
and in particular, users with heterogeneous outside options.

Concretely, if each user in a heterogeneous population has outside option $W
\sim \mc W$, let $U(\omega)$ be the event that a random user from the
platform's user base chooses to use the platform on a particular day
for content parameters $\omega$. A platform that only looks at engagement data
for users who choose to use the platform is effectively maximizing $\EE{W \sim
\mc W}{T(\omega) \given U(\omega) = 1}$; but including data about the number of
users on the platform ($\Pr_{W \sim \mc W}[U(\omega) = 1]$) or the total volume
of engagement ($\EE{W \sim \mc W}{T(\omega)}$ over the whole user base) would
give the platform information about whether increased engagement leads to higher
utility. And while these measures may be hard to interpret in terms of
the total number of users on the platform, if we imagine that the platform
currently has a population of users and measures relative changes in their
frequency of use, the platform can use trends over time or different
conditions of an A/B test to identify behavioral patterns.
Figure~\ref{fig:heterogeneous_consumption} demonstrates how these different
measures provide information that can help determine the structure of a content
manifold: as the platform varies $p$, high $\E{T \given U = 1}$ coupled
with low $\Pr[U = 1]$ indicates that the content manifold is ``junk food''-like.
In other words, the platform can distinguish between a change that drives
some users off the platform while increasing engagement for those who stay from
a change that increases both engagement and the number of active users.

\begin{figure}[ht]
  \centering
  \includegraphics[width=0.9\linewidth]{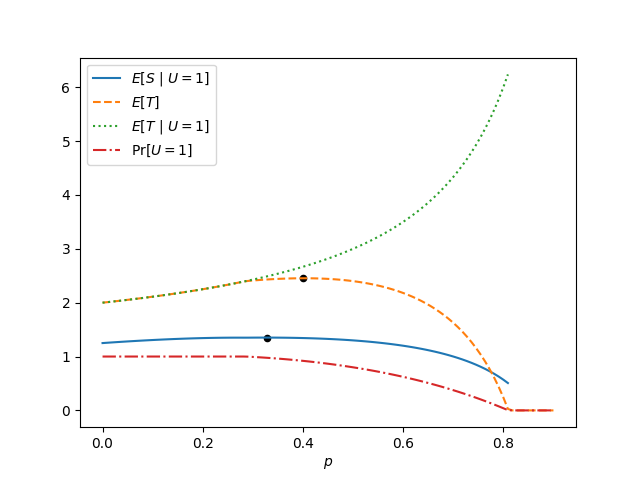}
  \caption{When $W \sim U[0.5, 1]$, different measures of engagement provide
    different insight into user utility for the content manifold $\mc M = \{(p, 0.5,
      \barv)
  : p \in [0, 1), \barv = 1.7p + (1-p)\}$.}%
  \label{fig:heterogeneous_consumption}
\end{figure}

\subsection{User interface and design choices}
\label{sec:UI}

While much of our discussion has been centered around \textit{content}
decisions, platforms also make \textit{design} decisions that significantly
impact user behavior. These user interface and design choices can also be
represented within our model in terms of $p$, $q$, and $\vbar$. In
Section~\ref{sec:model}, we considered how adding breaks to a user's content
feed might alter the content characteristics, reducing overall moreishness at
the expense of decreased value. In particular, we modeled a suggested
break as a piece of content with low moreishness, low value, and high span.
Thus, a user who encounters a break is likely to have system 2 activated,
allowing them to choose to leave the platform. Note that by
Proposition~\ref{prop:mixture} this equivalent to slightly updating the content
distribution to having lower $p$, lower $\vbar$, and higher $q$. We could also
imagine a feature like autoplay having the opposite effect on moreishness, since
it decreases the likelihood that system 2 will regain control.

Through a concrete model of the effects these features will have on content
parameters, we can use them to solve an \textit{inference} problem: when a
feature like suggested breaks or autoplay is introduced, how should we interpret
the resulting behavioral changes that we observe?
In the case of suggested breaks~\citep{instagram-breaks}, if the user is still
interested in consuming more content ($I_t = 1$), the break will have little
impact on the user's behavior; but if the user is only still on the platform
because system 1 is ``hooked'' ($I_t = 0$), the break will increase the
probability that system 2 regains control and chooses to leave the platform. On
a content manifold with low moreishness, this would have little impact on
engagement, but on a content manifold, with high moreishness, this would
dramatically reduce engagement (and increase utility in the process). And
moreover, if the platform observes that users often leave the platform
immediately following a break, this would be further evidence that the
platform's content is particularly moreish. Similarly, platforms can learn from
the behavior of users who voluntarily adopt interventions designed to promote
self-control: for example, if the number of users who voluntarily add suggested
breaks to their feeds increases as the platform optimizes for engagement, the
platform can infer that its optimization is increasing engagement at the expense
of utility. Thus, behavioral changes induced by design changes can be leveraged
to produce a more nuanced understanding of the underlying content manifold.

A platform could also use these design choices to make inferences about more
granular \textit{types} of content. Suppose that a platform introduces an
autoplay feature and observes that overall engagement increases, but in
particular, engagement with celebrity gossip videos increases much more than
engagement with educational science videos. The platform might reasonably
conclude that celebrity gossip videos are relatively more moreish than science
videos, and this might lead them to treat these types of content differently
when optimizing.

Similarly, consider the design of embedded media, where users directly share
content with one another through other platforms. A user may thus come across
content from the platform without actively seeking it out. In our model, we can
think of this as forcing the user to use the platform without giving system 2
the opportunity to initially refuse. If a platform observes that much of its
traffic for a particular type of content comes from embedded media, they might
again conclude that this content type has high moreishness. Importantly, these
conclusions rest on some intuition that the designer has for the effect of a
design decision on the underlying parameters of user behavior; they cannot be
inferred from data alone.

\section{Extending the Model to Incorporate User Content Choices}
\label{sec:tree}
Our results so far demonstrate the consequences of optimizing for engagement on
a linear content feed. While this describes many popular platforms, others allow
users to choose the next piece of content they consume from a limited choice
set. For example, with the autoplay feature turned off, YouTube presents a user
with a set of suggested videos to watch next. Graphically, we can model this
using a tree structure, where each node represents a piece of content, and each
branch out of a node points to one of $d$ pieces of content curated by the
platform.

In this framework, it is natural to consider the impact that the branching
factor $d$ has on the user's behavior. We might think of two opposing forces
that influence the optimal choice of $d$: (i) more choice (higher $d$) will lead
to higher utility and engagement, since users can choose the content they
prefer; but (ii) too much choice will overwhelm the user and reduce both utility
and engagement. This suggests that engagement-maximization is well suited to
select the optimal $d$, since it appears to be aligned with utility under these
competing forces. But our model suggests that there is a third factor that this
standard intuition fails to capture: (iii) more choice increases the likelihood
that system 1 is active, increasing engagement but potentially decreasing
utility. This would imply that maximizing engagement may lead to a higher $d$
than is optimal for user utility.

\xhdr{Extending the model}

In order to formalize this claim, we need to extend our model as defined for
linear feeds to tree-structured feeds. There are two important distinctions
between the linear and tree settings which will require slight modifications of
the model:
\begin{enumerate}
  \item Whenever the user finishes consuming a piece of content, they are
    forced to actively choose the next content. As a result, system 1 is
    interrupted: even if the previous video was highly appealing to system 1,
    the forced choice breaks system 1's activity. In our original model,
    system 1 was engaged or not based on the previous content; in this model,
    system 1's engagement depends on the characteristics of the \textit{next}
    piece of content to be consumed. Thus, the special case of $d=1$ does not reduce
    to our original model.
  \item On the other hand, the platform must show the user previews (such as a
    video thumbnail or article title) of the $d$ pieces of content, and those
    previews can activate system 1.
\end{enumerate}
Taken together, these differences imply that we should modify the model such
that whether or not system 1 is active at step $t$ depends not on the content
consumed at the previous step, but on the properties of the $d$ pieces of
content recommended for consumption at step $t+1$.\footnote{This also suggests
that a platform interested in maximizing engagement might try to design an
interface that avoids forcing a choice that might break system 1's activation
from the previous content while also using previews to further increase the
likelihood that system 1 is active. In fact, this is reasonable description of
YouTube's interface with autoplay turned on: the platform will automatically
start the next video after a set amount of time, and the user will not be forced to
make a choice. But YouTube also shows thumbnails of other videos, so even if
system 1 is not already active, these previews can also activate it.}

As before, we will assume that each piece of content has parameters $(p, q, v)$.
These parameters are not directly comparable to those in our linear-feed
model. We consider the parameters to be a function of both the content itself
and the platform design: enabling autoplay, for example, increases the value of
$p$ without changing the content itself.
Denote the parameters of the $d$ options presented to the user at step $t$ as
$\{(p_i, q_i, v_{i,t}): i \in \{1, \dots, d\}\}$. Formally, we will model the
user's choice at step $t$ as follows:
\begin{enumerate}
  \item First, the user scans the options in a random order, and if any appeals
    to system 1 (which happens independently with probability $p_i$ for each
    $i$), the user immediately selects that content.\footnote{We can substitute
      in a variety of different models of system 1 choice (e.g., scan items
      sequentially or choose the one that maximizes system 1 utility $u_{i,t}$);
      these lead to similar results, only changing $\barv$ as defined in
    Theorem~\ref{thm:tree_tau}.}
  \item If no option appeals to system 1 (which happens with probability
    $\prod_{i=1}^d (1-p_i)$), system 2 chooses either option that maximizes
    expected utility or leaves if all options yield negative expected utility.
    Importantly, the user observes $v_i$ for each $i$ when making this choice.
\end{enumerate}
For simplicity, we will assume that the content distribution at each node is
identical. In practice, this may not be true---clicking on a celebrity news
video is likely to lead to more celebrity news videos---and extending this model
to consider such correlations is an interesting direction for future work. We
will further assume that $q_i = q$ for all $i$, although this assumption can be
lifted with some additional work. Finally, we assume that each $v_{i,t} \sim \mc
V_i$ independently, where $\EE{v_{i,t} \sim \mc V_i}{v_{i,t}} = \barv_i$.

\subsection{Characterizing Behavior}
We begin by determining the user's behavior under this model. Note that because
the future content distribution is independent of the user's choice, system 2
will always choose the option that maximizes $v_i$ (or leave the platform).
First, we characterize the user's policy as a threshold on the maximal $v_i$.
(We prove this in Appendix~\ref{app:proofs}.)
\begin{theorem}
  \label{thm:tree_tau}
  Let $v_t = \max \{v_{i,t} : i \in \{1, \dots d\}\}$, and let $\barv =
  \p{\sum_{i=1}^d p_i \barv_i}/\p{\sum_{i=1}^d p_i}$. Under the tree model, the
  user's policy when system 2 is active is to choose a threshold $\tau^*$ and
  to choose the option with maximal value $v_t$ whenever $v_t \ge \tau^*$ and to
  leave the platform if no such option exists. If the user uses the platform,
  its utility is
  \begin{align*}
    \E{S}
    &= \hat p \p{\frac{\barv}{1 - \hat p q} - \frac{W}{1 - \hat p}}
    + \p{\frac{1 - \hat p}{1 - \hat p q}} \frac{\E{v_t \cdot
        \ind{v_t \ge \tau^*}} + \p{\frac{\vbar \hat p q}{1 - \hat p q} - \frac{W}{1
      - \hat p}} \Pr[v_t \ge \tau^*]}{1 - \p{\frac{q (1 - \hat p)}{1 - \hat p q}}
    \Pr[v_t \ge \tau^*]},
  \end{align*}
  where $\tau^*$ is chosen to maximize $\E{S}$ and can be approximated through
  binary search, and $\hat p = 1 - \prod_{i=1}^d (1-p_i)$. Still assuming that
  the user uses the platform, its engagement is
  \begin{align*}
    \E{T}
    &= \frac{\hat p}{1-\hat p} + \p{\frac{1}{1 - \hat p q}} \frac{\Pr[v_t \ge
    \tau^*]}{1 - \p{\frac{q (1 - \hat p)}{1 - \hat p q}} \Pr[v_t \ge \tau^*]}.
  \end{align*}
  If the expression for $\E{S}$ is negative, then the user doesn't use the
  platform, and $S = T = 0$.
\end{theorem}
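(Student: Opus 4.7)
The plan is to first establish the threshold structure of system~2's policy, then decompose the session into two phases based on the state of $I_t$ (paralleling the linear-feed analysis), and finally solve the resulting Bellman recurrences and reconcile the form with the theorem statement.

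First I would prove the threshold claim. When system~2 is called upon to choose (probability $1-\hat p$), it observes the realized values $v_{1,t},\ldots,v_{d,t}$ and compares the payoff of selecting the best option, $\max_i v_{i,t} - W + q V_1 + (1-q) V_0$, against leaving, which yields $0$; here $V_1$ and $V_0$ denote the continuation values when the next step begins with $I_{t+1}=1$ or $I_{t+1}=0$, respectively. Because future content is i.i.d.\ and the current action does not change the future distribution, the continuation term $qV_1 + (1-q)V_0$ is a constant in $v_t$. Hence system~2 always picks the maximizer when it picks at all, and the decision reduces to a threshold rule with $\tau^* = W - qV_1 - (1-q)V_0$. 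Unimodality of $\E{S}$ in $\tau^*$ (which affects $V_1$ only through $\rho = \Pr[v_t \ge \tau^*]$ and $\mu = \E{v_t\,\ind{v_t \ge \tau^*}}$) then justifies the binary-search claim.

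Second, I would compute the phase-2 value functions directly. Once $I_t$ first transitions to $0$, system~2 derives no more value and would leave at any opportunity, so the user remains only while system~1 continues to pick something. Each phase-2 step, with probability $\hat p$ some option appeals to system~1 (the user consumes and pays $W$), while with probability $1-\hat p$ system~2 regains control and leaves. A one-step recursion gives $V_0 = -\hat p W/(1-\hat p)$ and $T_0 = \hat p/(1-\hat p)$.

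Third, I would set up the phase-1 Bellman recurrences. Let $\eta = \hat p + (1-\hat p)\rho$ be the probability that a consumption actually occurs at the current step. Each phase-1 step has three outcomes: stay in phase~1 (probability $\eta q$), consume and transition to phase~2 (probability $\eta(1-q)$), or terminate without consuming (probability $1-\eta$). The expected immediate utility is $E_1 = \hat p \barv + (1-\hat p)\mu - W\eta$ and the immediate engagement is $\eta$, giving the linear equations $V_1 = E_1 + \eta q V_1 + \eta(1-q)V_0$ and $T_1 = \eta + \eta q T_1 + \eta(1-q)T_0$. Substituting $V_0$ and $T_0$ and rearranging yields closed forms with denominator $1-\eta q$.

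The main obstacle will be the final algebraic reconciliation: the Bellman solution naturally has denominator $1-\eta q$, whereas the theorem is stated in terms of $A = 1/(1-\hat p q)$, $B = 1/(1-\hat p)$, and the factor $1 - q(1-\hat p)A\rho$. The bridge is the identity $1 - q(1-\hat p)A\rho = (1-\eta q)/(1-\hat p q)$, together with the simplification $\hat p(1-\eta q) + \rho(1-\hat p) = \eta(1-\hat p q)$; applying these lets the $\barv$-, $\mu$-, and $W$-terms regroup into the form stated in the theorem. Finally, the boundary case where the computed $V_1$ is non-positive corresponds to system~2 declining to visit at all, giving $S = T = 0$.
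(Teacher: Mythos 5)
Your proposal is correct and arrives at exactly the theorem's formulas, but it organizes the dynamic program differently from the paper's proof. Writing $\rho = \Pr[v_t \ge \tau^*]$ and $\eta = \hat p + (1-\hat p)\rho$, I verified that your two-phase linear system reproduces both displayed expressions; the two identities you flag, $1 - q(1-\hat p)\rho/(1-\hat p q) = (1-\eta q)/(1-\hat p q)$ and $\hat p(1-\eta q) + \rho(1-\hat p) = \eta(1-\hat p q)$, are indeed the right bridge. The paper's state variable is the pair consisting of the satiation indicator $I_t$ and the controlling system $Y_t$: it aggregates the utility accrued over an entire system-1-controlled run at once (using the geometric run length and the probability $\psi = q(1-\hat p)/(1-\hat p q)$ of surviving such a run unsatiated), expresses the value at system-2-active steps as a fixed point $F(\tau) = A(\tau)/(1-B(\tau))$, and locates the optimum by binary-searching the value $\gamma$ in the auxiliary map $M(\gamma) = \max_\tau A(\tau)+B(\tau)\gamma$ (Lemma~\ref{lem:gamma_simple}). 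Your decomposition conditions only on $I_t$ and averages over who is in control within each step, which yields a smaller linear system in $(V_0,V_1)$ and $(T_0,T_1)$ and a more direct derivation of $\E{T}$ (the paper instead obtains $\E{T}$ from $\E{S}$ by the substitution $\barv \mapsto 0$, $W \mapsto -1$). Both arguments establish the threshold structure identically, via monotonicity of the one-step continuation value in $v_t$. The one place your sketch is thinner than it should be is the binary-search claim: unimodality of the value in $\tau$ is true but not immediate, since the sign of the derivative depends on the value itself through $\rho$ and $\mu$; it can be rescued by noting that the quantity whose sign governs the derivative vanishes at $\tau^*$ and is strictly increasing wherever it is negative, or one can adopt the paper's monotone fixed-point map on $\gamma$, which sidesteps the issue. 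With that point patched, your proof is sound.
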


\subsection{Platform Optimization}

\begin{figure}[ht]
  \centering
  \includegraphics[width=0.9\linewidth]{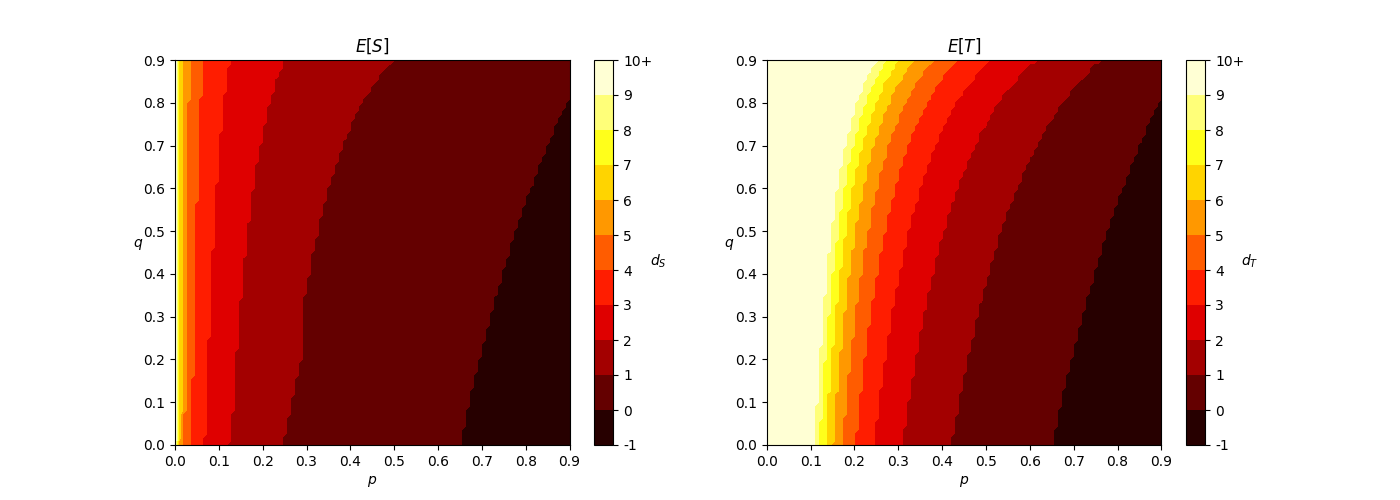}
  \caption{The choice of $d$ that maximizes $\E{S}$ and $\E{T}$ when each
    $v_{i,t}$ is $1$ with probability $0.5$ and $4$ with probability $0.5$. When
  the optimal $d$ is $0$, the user will not use the platform for any $d$.}%
  \label{fig:tree_heatmap}
\end{figure}

Based on this characterization, we can reason about the impact of system 1 on
platform design. Suppose the platform wants to determine $d^*$, the optimal
number of choices to present to a user. Doing this experimentally (e.g., through
A/B testing) would lead the platform to choose the $d_T$, defined to be the
choice of $d$ that maximizes $\E{T}$; however, there is no guarantee that this
will maximize the user's utility. Figure~\ref{fig:tree_heatmap} shows, for a
fixed value distribution $\mc V$, $d_S$ and $d_T$ that maximize $\E{S}$ and
$\E{T}$ respectively as a function of $p$ and $q$. Qualitatively, $p$ and $q$
have the same impact on $d_S$ and $d_T$: higher $p$ leads to lower $d$, and
higher $q$ leads to higher $d$. Intuitively, this is what we might expect. As
$d$ increases, the system 2 has better options from which to choose, but the
chance that system 2 is activated goes down since it is more likely that some
piece of content appeals to system 1. Thus, if $p$ is too high, the user will
refuse to use the platform altogether because $\hat p$, the overall probability
system 1 is active, increases exponentially in $d$. On the other hand, as $q$
increases, the user wants to spend more time on the platform, allowing them to
benefit more from the increase in value that higher $d$ brings.

While the two plots are directionally similar, they differ in that for fixed
parameters, $d_S$ tends to be lower than $d_T$. Again, this should match our
intuition: increasing $d$ should raise the likelihood that system 1 is active,
thereby increasing engagement. As a result, we would expect for
engagement-maximization to result in the maximal $d$ such that the user is
still willing to use the platform. This would imply that $d_S \le d_T$ in
general; however, there's a competing force that can lead to the opposite
outcome: increasing $d$ makes the user more discerning, using a higher
threshold $\tau^*$ to determine whether it wants to stay on the platform out of
fear of getting sucked in by system 1. Example~\ref{ex:tree} (in
Appendix~\ref{app:tree_ex}) demonstrates that 
it's possible to have $d_S > d_T$, meaning the user gets higher utility and
consumes less for a larger value of $d$. Engagement-maximization is
thus unpredictable, and its effects cannot always be simply characterized.

\section{Related Work}
to the questions of online platform design. Psychologically, our formalism draws
more on behavioral theories of time
inconsistency~\citep{akerlof1991procrastination}, multiple
selves~\citep{thaler1981economic,fudenberg2006dual}, and hyperbolic
discounting~\citep{laibson1997golden,o1999doing} rather than on dual processing
models~\citep{evans2008dual}. Our work draws upon models in both economics and
computer science, and in particular, models of naive and sophisticated
agents~\citep{o1999doing,kleinberg2014time,kleinberg2016planning}.
Related theoretical models study preference inconsistency in contexts
like productivity~\citep{makarov2011networking} and
privacy~\citep{liu2020data}. While our focus on online content consumption,
empirical evidence for inconsistent preferences has been found in a wide range
of settings including grocery
shopping~\citep{milkman2010ll,zatz2021comparing}, charitable
giving~\citep{cryder2017charity}, and movie
consumption~\citep{milkman2009highbrow}.

The relationship between behavior and users' true preferences online has
been studied both theoretically and empirically under a variety of names,
including behaviorism~\citep{ekstrand2016behaviorism}, digital
addiction~\citep{allcott2021digital}, regretful or problematic
use~\citep{cho2021reflect,shin2013automatically}, value
measurement~\citep{milli2021optimizing,lyngs2018so}, and
value-alignment~\citep{stray2021you}. Recent empirical studies document
self-control problems with social
media~\cite{allcott2021digital,hoong2021self}. In practice, some have sought to
address this issue by developing tools to help users practice self-control
(e.g.,~\citet{hiniker2016mytime}; see~\citet{cho2021reflect} for more examples).

Our model also relates to questions about how design choices can influence user
behavior and welfare, often studied in the HCI community. Examples include the
impacts of design on agency~\citep{lukoff2021design}, persuasive
technology~\citep{fogg2002persuasive}, dark patterns~\citep{gray2018dark}, and
behavioral change~\citep{sleeper2015would,lyngs2019self,lyngs2020just}. Our hope
is that our model can provide some insight into what types of design
interventions might help users, when they would be appropriate, and how to
measure their impacts.

\section{Conclusion}
In this work, we show the problems can arise when platforms
misunderstand the psychology of user behavior.  In the face of inconsistent
preferences, maximizing \textit{apparent} utility (i.e., engagement) fails to
maximize \textit{actual} utility.
Our characterization theorems show that the
misalignment between engagement and utility is not universal. For some platforms
and content distributions, engagement is a good proxy for utility; for others,
there are structured reasons why they diverge. Through the lens of preference
inconsistency, we analyzed how a platform might use additional sources of data
or design decisions to better understand how content properties interact with
engagement optimization.

Our work seeks to open up a dialogue between theoretical models, user interface
design, and social media platforms. We see several natural directions for future
work. First, a richer characterization of content manifolds will provide a
better understanding of how different types of content interact differently with
engagement optimization. Second, while we suggest a few ways one might elicit
and incorporate measures of user utility beyond engagement, how best to merge
engagement data with various other forms of information about user welfare
remains an open question. Third, our model leads to a number of empirical
questions regarding the properties of content on existing platforms and how they
differ, particularly across different users. And finally, while our model
treats users as static, recent work suggests that user adaptation to content
(e.g., habit formation) may play a major role in
overconsumption~\citep{allcott2021digital}. Future work could integrate models
of both platform and user adaptation to develop a more complete picture of
consumption.

There's a growing understanding that there's something not quite right with
platform optimization based on user behavioral data, even ignoring the privacy
concerns and financial incentives involved. Platforms appear to increasingly
realize that no matter how carefully they measure engagement, using it as a
metric doesn't quite lead to a product users are actually happy with. This paper
tries to address these issues by arguing that no matter how sophisticated the
model of engagement, it cannot be effective unless it acknowledges that users
have internal conflicts in their preferences. This realization has led to two
kinds of changes. On the design side, it has led to experiments with  UI changes
to see if they improve happiness (such as time limits). On the content
optimization side, it has resulted in attempts to augment passive user behavior
data with explicit survey measures of happiness or satisfaction. In both cases,
the approaches are largely based on intuition. Our model provides a framework
for thinking about why engagement optimization may be failing. It also provides
a systematic way to think about and possibly generate platform level UI changes,
akin to \citet{milli2021optimizing} and \citet{lyngs2019self}. Formalism has
another important benefit. Survey measures will always be far more expensive and
rarer than behavioral data. For platforms to succeed, they need some way to
harness the engagement data that are plentiful while not being misled by them.

Formal models that incorporate the richness of human psychology, such as ours,
can help platforms use survey data to better understand the consequences of
their choices and thereby how to improve their optimization. More generally,
algorithms built on human behavioral data in settings ranging from online
platforms to expert judgment inherit the behavioral biases and inconsistencies
of their human data providers. Applying formal models of human behavior to these
settings can help algorithm designers ``invert'' behavior to understand
underlying values and beliefs \citep{perspectives-psych}.

\paragraph*{Acknowledgments}
We thank Cristos Goodrow and the participants in the AI, Policy, and Practice
initiative at Cornell for valuable feedback. This work was supported by a
Vannevar Bush Faculty Fellowship, a Multidisciplinary University Research
Initiative grant [W911NF-19-0217], and funding from the MacArthur Foundation,
the Simons Foundation and the Center for Applied AI at Booth School of Business.

\bibliographystyle{plainnat}
\bibliography{refs}

\appendix

\section{A Stateful Utility Model}
\label{app:stateful}

Here, we derive results for the alternative model for $I_t$ presented in
Section~\ref{sec:gamma}.

\subsection{Characterizing behavior}

At any time $t$ where system 2 is active, the agent must choose between
remaining on the platform until system 2 is once again active or leaving the
platform. Let $T_1, T_2, \dots$ be the times at which system 2 is active. Then,
the agent chooses to remain on the platform at time $T_k$ if and only if
\begin{align*}
  \E{\sum_{t=T_k}^{T_{k+1}-1} v_t \gamma^t} \ge
  \E{\sum_{t=T_k}^{T_{k+1}-1} W}.
\end{align*}
Observe that $T_{k+1} - T_k$ is the waiting time for a Bernoulli event of
probability $1-p$. Let $B$ be distributed as $T_{k+1} - T_k$. Then, the agent's
condition for remaining on the platform is
\begin{align*}
  \E{\sum_{t=T_k}^{T_{k+1}-1} v_t \gamma^t}
  &\ge \E{\sum_{t=T_k}^{T_{k+1}-1} W} \\
  \E{\sum_{t=0}^{B-1} v_{t+T_k} \gamma^{t+T_k}}
  &\ge \E{\sum_{t=0}^{B-1} W} \\
  \vbar \gamma^{T_k} \E{\sum_{t=0}^{B-1} \gamma^t}
  &\ge W \E{B}.
\end{align*}
Note that for any $r \in (0, 1)$,
\begin{align*}
  \E{\sum_{t=0}^{B-1} r^t}
  &= \sum_{n=0}^{\infty} r^n \Pr[B-1 \ge n] \\
  &= \sum_{n=0}^{\infty} r^n p^n \\
  &= \frac{1}{1-pr}.
\end{align*}
Therefore, the agent will continue if and only if
\begin{align*}
  \vbar \gamma^{T_k}\E{\sum_{t=0}^{B-1} \gamma^t}
  &\ge \frac{W}{1-p} \\
  \frac{\vbar \gamma^{T_k}}{1-p\gamma} &> \frac{W}{1-p} \\
  \frac{\vbar \gamma^{T_k}}{W} &> \frac{1-p\gamma}{1-p}
  \numberthis \label{eq:gamma-part}
\end{align*}

\begin{proposition}
  \label{prop:t-decrease}
  As moreishness increases, the agent's decision to continue at time $t$ can
  only change from willing to unwilling.
\end{proposition}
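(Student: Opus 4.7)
The plan is to analyze the continuation condition \eqref{eq:gamma-part} derived just above the proposition, and to show that its right-hand side is monotonically nondecreasing in the moreishness parameter $p$, while the left-hand side does not depend on $p$ at all. From this, the proposition follows immediately.

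More concretely, I would first recall that at any time $T_k$ when system 2 is in control, the agent chooses to continue on the platform if and only if
\begin{equation*}
\frac{\vbar \gamma^{T_k}}{W} \;\ge\; \frac{1-p\gamma}{1-p}.
\end{equation*}
The left-hand side $\vbar \gamma^{T_k}/W$ is a function only of $\vbar$, $W$, $\gamma$, and the time $T_k$, so it is independent of $p$. The only thing to check, therefore, is monotonicity of the map
\begin{equation*}
f(p) \;\triangleq\; \frac{1 - p\gamma}{1 - p}
\end{equation*}
in $p$ on the interval $[0,1)$.

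Next I would compute $f'(p)$ directly using the quotient rule. A short calculation gives
\begin{equation*}
f'(p) \;=\; \frac{1 - \gamma}{(1 - p)^2}.
\end{equation*}
Since $\gamma \in [0,1)$ by assumption of the stateful model, we have $1 - \gamma > 0$, and $(1-p)^2 > 0$ on $[0,1)$, so $f'(p) > 0$. Hence $f$ is strictly increasing in $p$.

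Finally I would conclude: if moreishness increases from $p$ to $p' > p$, the threshold $f(p')$ that $\vbar\gamma^{T_k}/W$ must exceed is strictly larger than $f(p)$, while the left-hand side is unchanged. So any $T_k$ at which the agent was previously unwilling to continue remains one at which the agent is unwilling; the only possible change is for a previously satisfied inequality to become violated, i.e., a switch from willing to unwilling. There is no real obstacle in this argument; the only subtlety worth flagging is that the monotonicity is a statement about the \emph{myopic-looking} condition \eqref{eq:gamma-part}, which is actually the correct continuation rule because system 2's decision at $T_k$ is based only on the net value accumulated until the next time system 2 regains control, and this net value is precisely what appears on both sides of \eqref{eq:gamma-part}.
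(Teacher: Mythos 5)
Your proposal is correct and takes essentially the same approach as the paper: both arguments reduce the claim to the monotonicity in $p$ of the right-hand side $\frac{1-p\gamma}{1-p}$ of the continuation condition, with the paper establishing this by the algebraic rewriting $\frac{1-\gamma}{1-p}+\gamma$ while you compute the derivative $\frac{1-\gamma}{(1-p)^2}>0$; the two are equivalent.
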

\begin{proof}
  Observe that $\frac{1-p\gamma}{1-p}$ is increasing in $p$:
  \begin{align*}
    \frac{1-p\gamma}{1-p}
    &= \frac{1-\gamma + \gamma - p\gamma}{1-p} \\
    &= \frac{1-\gamma}{1-p} + \frac{\gamma(1-p)}{1-p} \\
    &= \frac{1-\gamma}{1-p} + \gamma
  \end{align*}
  Thus, as $p$ increases, the right hand side of~\eqref{eq:gamma-part}
  increases monotonically, which proves the claim.
\end{proof}

The agent's behavior is also monotone in $T_k$: if the agent is willing to
continue at some $T_k = t$, then they are also willing to continue for any $t' <
t$, since the left hand side of~\eqref{eq:gamma-part} is decreasing in $T_k$.

\xhdr{Expected consumption}

Let $t^*$ be the largest integer such that
\begin{align*}
  \frac{\vbar \gamma^{t^*}}{W}
  &\ge \frac{1-p\gamma}{1-p} \\
  \gamma^{t^*}
  &\ge \frac{W(1-p\gamma)}{\vbar(1-p)} \\
  t^*
  &\le \frac{\log\p{\frac{W(1-p\gamma)}{\vbar(1-p)}}}{\log
  \gamma}. \tag{$\log \gamma < 0$}
\end{align*}
Then,
\begin{equation*}
  t^* \triangleq \left \lfloor
    \frac{\log\p{\frac{W(1-p\gamma)}{\vbar(1-p)}}}{\log
  \gamma} \right \rfloor.
\end{equation*}

Note that $t^*$ can be negative meaning the agent wouldn't use the platform at
all. The agent's behavior has a similar structure to our previous model: the
agent will always continue (whether system 1 or 2 is in control) for any $t \le
t^*$. For any $t > t^*$, the agent will leave the platform if system 2 is
active. The time that the agent remains on the platform beyond $t^*$ is again
the waiting time for a Bernoulli even with probability $1-p$.
Overall consumption is thus
\begin{equation}
  \E{T}
  = \begin{cases}
    t^* + \frac{1}{1-p} & t^* \ge 0 \\
    0 & \text{otherwise}
  \end{cases}.
  \label{eq:ET-gamma}
\end{equation}

While Proposition~\ref{prop:t-decrease} implies that $t^*$ is weakly decreasing
in $p$, expected consumption $\E{T}$ is not necessarily monotonic. Instead, it
has a sequence of increasing segments punctuated by discrete drops when $t^*$
decreases as shown in Figure~\ref{fig:gamma-model}.

\begin{proposition}
  Excluding discontinuities in $t^*$, $\E{T}$ is strictly increasing in $p$.
\end{proposition}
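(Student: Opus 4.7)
The plan is to pin down what "excluding discontinuities in $t^*$" means operationally, and then differentiate $\E{T}$ on the relevant intervals. By Proposition~\ref{prop:t-decrease}, $t^*$ is a weakly decreasing, integer-valued, piecewise-constant function of $p$. So the set of $p \in [0,1)$ at which $t^*$ is continuous decomposes into a union of (possibly degenerate) intervals on each of which $t^*$ takes some fixed integer value, say $t^* = k$. It then suffices to show that on every such interval, $\E{T}$ is strictly increasing in $p$.

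Fix such an interval $J$ and let $k$ be the constant value of $t^*$ on $J$. I would split into the two cases from~\eqref{eq:ET-gamma}. If $k < 0$, then $\E{T} \equiv 0$ on $J$, so this is a case that the proposition implicitly excludes (the statement is about the non-trivial regime where the user actually uses the platform). In the case $k \ge 0$, the formula~\eqref{eq:ET-gamma} gives
\begin{equation*}
\E{T} = k + \frac{1}{1-p}
\end{equation*}
throughout $J$. Differentiating with respect to $p$ yields
\begin{equation*}
\frac{d\E{T}}{dp} = \frac{1}{(1-p)^2} > 0
\end{equation*}
for all $p \in J \subseteq [0,1)$, which establishes strict monotonicity on $J$.

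There is no real obstacle here: once one observes that $t^*$ is piecewise constant and that on each constant piece $\E{T}$ is, up to an additive integer, exactly $1/(1-p)$, the claim reduces to the elementary monotonicity of $1/(1-p)$ on $[0,1)$. The only thing to be careful about is phrasing the "excluding discontinuities" clause correctly, since across a downward jump in $t^*$ the function $\E{T}$ can decrease (this is precisely the non-monotonicity illustrated in Figure~\ref{fig:gamma-model}), and the regime $t^* < 0$ must be read as degenerate because the user abstains from the platform entirely.
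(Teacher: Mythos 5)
Your proof is correct, and it takes a genuinely more elementary route than the paper's. You keep the floor in the definition of $t^*$: by Proposition~\ref{prop:t-decrease} (together with the continuity of the un-floored expression in $p$), $t^*$ is an integer-valued, weakly decreasing step function of $p$, so away from its discontinuities it is locally constant, and on each such piece $\E{T} = k + \frac{1}{1-p}$ has derivative $\frac{1}{(1-p)^2} > 0$ --- nothing more is needed. The paper instead drops the floor, differentiates the resulting smooth expression $\frac{1}{\log\gamma}\log\p{\frac{W(1-p\gamma)}{\vbar(1-p)}} + \frac{1}{1-p}$ with respect to $p$, and shows the derivative is positive via the inequality $\log\gamma \le \gamma - 1$. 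That computation establishes a somewhat stronger global fact --- the continuous relaxation of $\E{T}$ is strictly increasing on all of $[0,1)$, which is what makes the sawtooth in Figure~\ref{fig:gamma-model} trend upward across segments rather than merely within them --- but for the literal statement your local argument suffices and is cleaner. Your explicit flagging of the degenerate regime $t^* < 0$, where $\E{T} \equiv 0$ and the claim must be read as restricted to the participation regime, is a point the paper's proof glosses over.
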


\begin{proof}
  To see this, observe that if
  we ignore the floor in the definition of $t^*$, we have

  \begin{align*}
    \frac{d}{dp} \p{\frac{\log\p{\frac{W(1-p\gamma)}{\vbar(1-p)}}}{\log
    \gamma} + \frac{1}{1-p}}
  &= \frac{d}{dp} \p{\frac{\log W - \log \vbar + \log(1-p\gamma) -
  \log (1-p)}{\log \gamma} + \frac{1}{1-p}} \\
  &= \frac{d}{dp} \p{\frac{\log(1-p\gamma) - \log (1-p)}{\log
  \gamma} + \frac{1}{1-p}} \\
  &= \frac{1}{\log \gamma} \p{\frac{-\gamma}{1-p\gamma} +
  \frac{1}{1-p}} + \frac{1}{(1-p)^2} \\
  &= \frac{1}{\log \gamma}
  \p{\frac{1-\gamma}{(1-p\gamma)(1-p)}} + \frac{1}{(1-p)^2}.
  \end{align*}
  Note that $1-p\gamma > 1 - p$ and $\log \gamma < 0$. Therefore,
  \begin{align*}
    \frac{1}{\log \gamma} \p{\frac{1-\gamma}{(1-p\gamma)(1-p)}} +
    \frac{1}{(1-p)^2}
    &> \frac{1}{\log \gamma} 
    \p{\frac{1-\gamma}{(1-p)^2}} + \frac{1}{(1-p)^2} \\
    &= \frac{1}{(1-p)^2} \p{1 + \frac{1-\gamma}{\log \gamma}}.
  \end{align*}
  Thus, it suffices to show that $\frac{1-\gamma}{\log \gamma} \ge -1$. Using the
  inequality $\log (1+x) \le x$ for $x > -1$, we have
  \begin{align*}
    \log(1+(\gamma-1)) &\le \gamma-1 \\
    \log \gamma & \le \gamma-1 \\
    -\log \gamma & \ge 1-\gamma \\
    -1 & \le \frac{1-\gamma}{\log \gamma}. \tag{$\log \gamma < 0$}
  \end{align*}
\end{proof}

\begin{figure}[t]
  \includegraphics[width=0.45\linewidth]{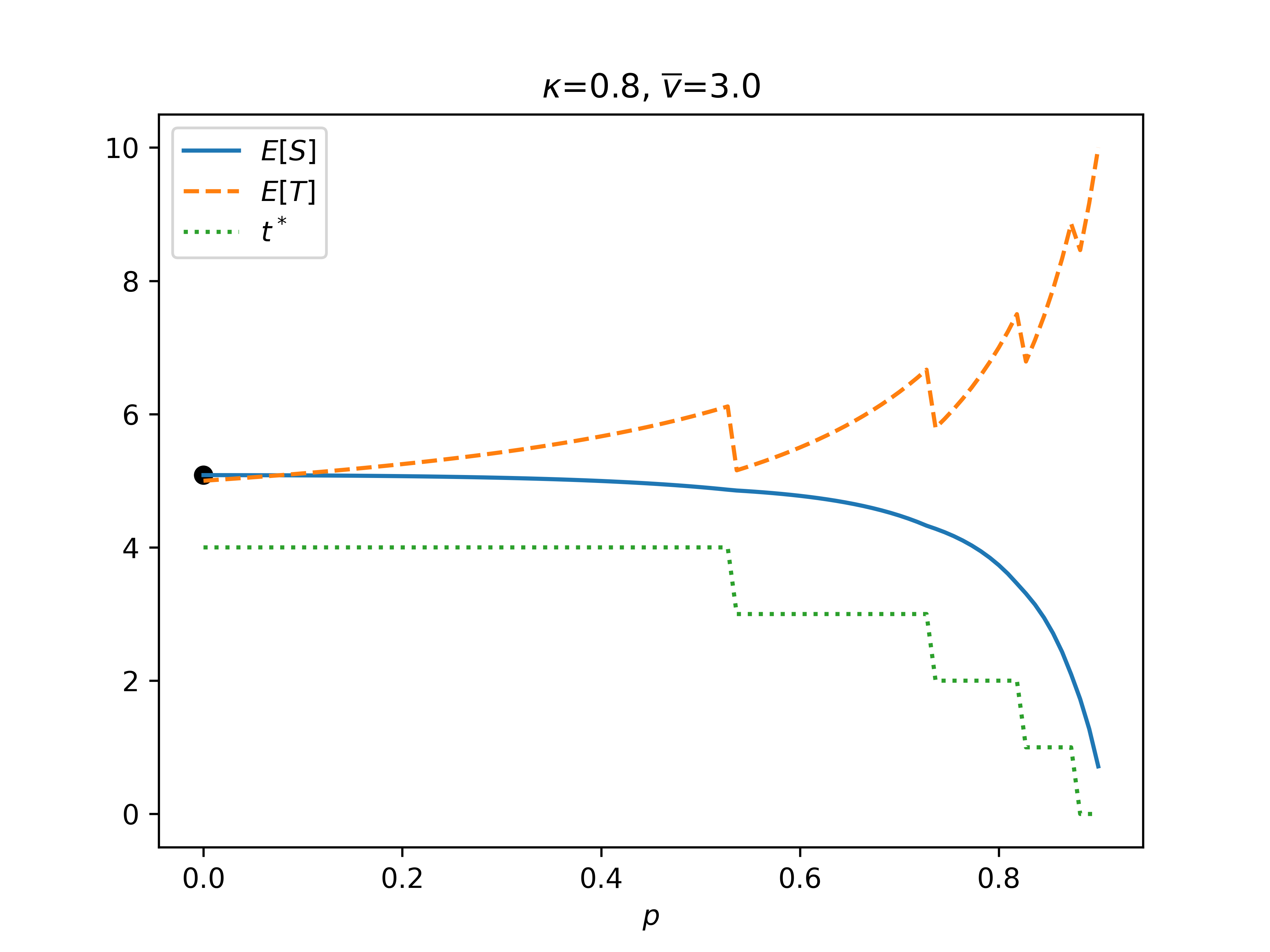}
  \includegraphics[width=0.45\linewidth]{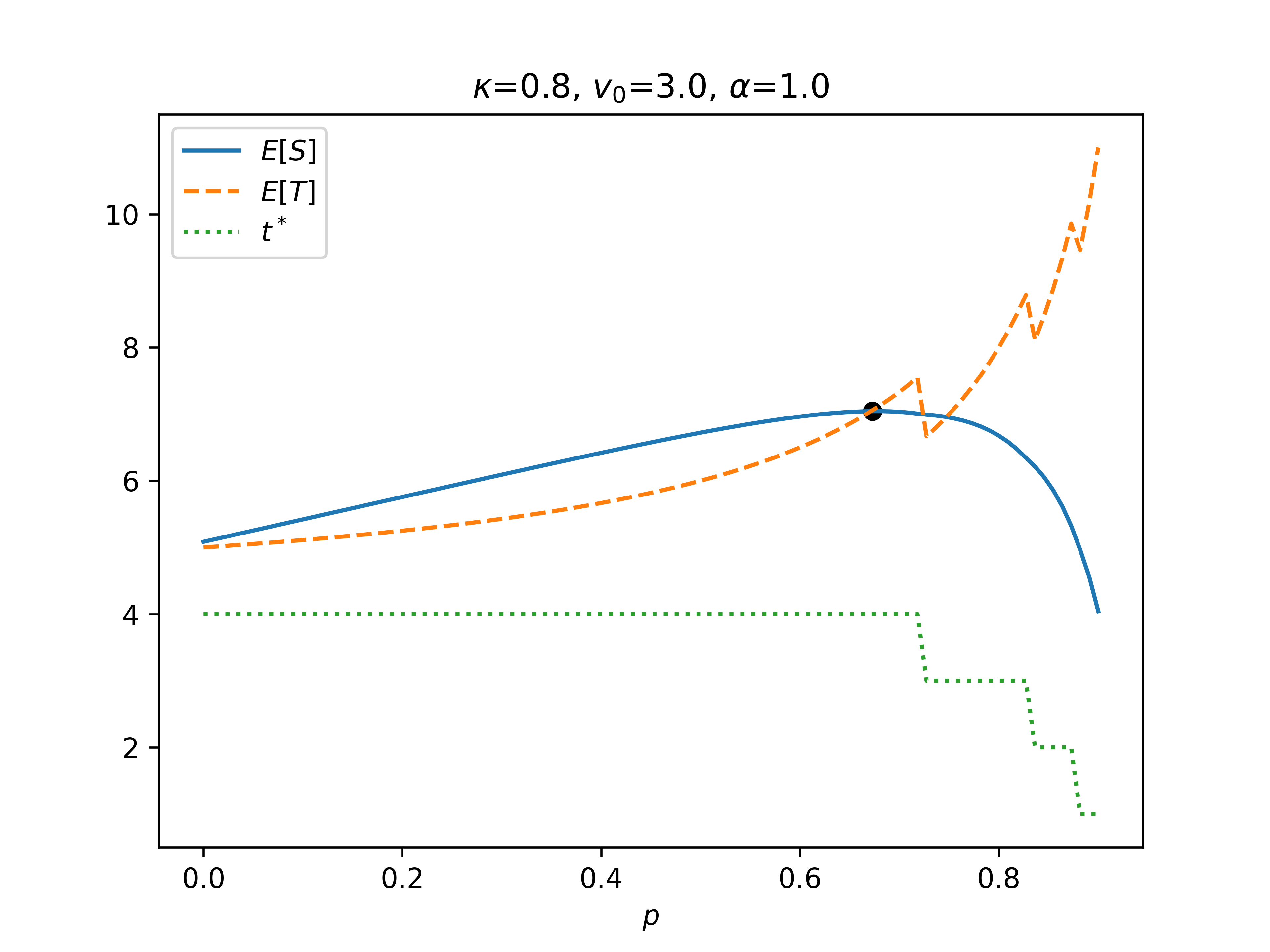}
  \caption{Consumption and utility for similar parameters to
  Examples~\ref{ex:moreishness} and~\ref{ex:both}}%
  \label{fig:gamma-model}
\end{figure}

\xhdr{Expected utility}

The agent's expected utility is the value they derive from consuming content
minus their outside option. Assuming $t^* \ge 0$, this is
\begin{align*}
  \E{S}
  &= \E{\sum_{t=0}^{T-1} \gamma^t v_t - W} \\
  &= \vbar \E{\sum_{t=0}^{T-1} \gamma^t} - W \E{T} \\
  &= \vbar \p{\sum_{t=0}^{t^*-1} \gamma^t + \E{\sum_{t=t^*}^{T-1} \gamma^t}} -
  W \p{t^* + \frac{1}{1-p}} \\
  &= \vbar \p{\frac{1-\gamma^{t^*}}{1-\gamma} + \gamma^{t^*} \E{\sum_{t=0}^{B-1}
  \gamma^t}} - W \p{t^* + \frac{1}{1-p}} \\
  &= \vbar \p{\frac{1-\gamma^{t^*}}{1-\gamma} + \gamma^{t^*} \frac{1}{1 - p
  \gamma}} - W \p{t^* + \frac{1}{1-p}}.
  \numberthis \label{eq:ES-gamma}
\end{align*}

In Figure~\ref{fig:gamma-model}, $\E{S}$ appears to be much smoother than
$\E{T}$. This is because each time $t^*$ decreases by 1, the agent is
approximately indifferent between continuing on the platform and leaving at time
$t^*$. Formally, let $T_f$ be that last time at which the agent chooses to
continue. Then,
\begin{align*}
  \E{S} &= \sum_{t=0}^{t^*} \E{S \given T_f = t}.
\end{align*}
As the agent becomes indifferent between remaining and leaving at $t^*$, $\E{S
\given T_f = t^*} = 0$. This means that
\begin{equation}
  \label{eq:indif-t}
  \sum_{t=0}^{t^*} \E{S \given T_f = t}
  = \sum_{t=0}^{t^*-1} \E{S \given T_f = t}.
\end{equation}
Let $\omega_0$ be a point on the content manifold given by $(p, \gamma, \vbar)$ at
which there is a discontinuity in $t^*$. In an arbitrarily small neighborhood
around $\omega$, it can be take on at most two distinct values: $t_1^*$ and
$t_2^* = t_1^* - 1$. In either case, by~\eqref{eq:indif-t},
\begin{align*}
  \lim_{\omega \to \omega_0} \E{S}
  &= \sum_{t=0}^{t_2^*} \E{S \given T_f = t} \bigg|_{\omega = \omega_0}
\end{align*}
Thus, $\E{S}$ is continuous over the content manifold, even though $\E{T}$ is
not. Note that derivatives of $\E{S}$ may not be continuous, since the addition
or removal of the $\E{S \given T_f = t^*}$ term discontinuously affects
derivatives.
\section{Equivalent Model Formulation}
\label{app:equiv}

Here, we present an equivalent formulation of diminishing returns.
We assume the agent has a capacity $C$, and that each piece
of content has a known size $s_t$.
The agent derives value $v_t$ from
consuming content if its capacity has not yet been filled, and 0 otherwise.
Formally, this is
\begin{align*}
  u_t = \begin{cases}
    v_t & \sum_{t' < t} s_t \le C \\
    0 & \text{otherwise}
  \end{cases}
\end{align*}
Let $I_t$ be the indicator for whether $\sum_{t' < t} s_t \le C$. We can
use this to write $u_t = v_t I_t$.

We assume that $C$ is drawn from an exponential distribution, i.e., $\Pr[C > c]
= e^{-\lambda c}$, but the agent never observes $C$ until its capacity has been
exceeded. Without loss of generality, assume $\lambda = 1$. Note that the under
this formulation, the agent is memoryless: for any $t$, conditioned on $I_t =
1$, $C$ is still exponentially distributed. As before, the agent has an outside
option with some fixed value $W$, so its net welfare for consuming content at
time $t$ is $u_t - W = v_t I_t - W$.

Given that $I_t = 1$, the probability that $I_{t+1} = 0$ is $e^{-s_t}$. Thus,
setting $q_t = e^{-s_t}$, this model resembles our original one. To show that
they are equivalent when $s_t$ comes from a known distribution but is
unobservable to the agent before consumption, let $q = \E{e^{-s_t}}$. Then, we
can see that regardless of the agent's consumption history, $\Pr[I_{t+1} = 1
\given I_t = 1] = q$, making this identical to the original model.

A related but slightly different formulation of diminishing returns is to assume
that the agent's utility at time $t$ for consuming content is $I_t v_t - W$ as
before, but with $I_t$ as a deterministic and decreasing sequence in $t$ (e.g.,
$I_t = \gamma^t$ for some $\gamma \in [0, 1)$). Analysis of this model yields
some qualitatively similar results, though its determinism makes it slightly
less clean to analyze.

\section{Deferred Proofs}
\label{app:proofs}

\begin{proof}[Proof of Theorem~\ref{thm:tree_tau}]
Suppose system 2 is active at step $t$. If $I_t = 0$, meaning
the agent has satiated, the agent will leave the platform since it cannot get
any utility from staying. Thus, to characterize behavior, we need only reason
about the case where $I_t = 1$. Let $S_t$ denote the agent's expected utility
for its remaining time on the platform beginning with step $t$. Let $Y_t = 1$ if
system 2 is active at time $t$ and $Y_t = 0$ otherwise. The agent's expected
utility is
\begin{align*}
  \E{S}
  &= \hat p \E{S_0 \given I_0 = 1, Y_0 = 0} + (1-\hat p) \E{S_0 \given I_0
  = 1, Y_0 = 1} \\
  &= \hat p \E{S_t \given I_t = 1, Y_t = 0} + (1-\hat p) \E{S_t \given I_t
  = 1, Y_t = 1}
  \numberthis \label{eq:eS_breakdown}
\end{align*}
since system 1 is active with probability $\hat p$ and the agent is memoryless.
To find the first term, let $N(t)$ be the next time system 2 is active after
time $t$, i.e., $N(t) - t$ is geometrically distributed with parameter
$1/(1-\hat p)$.
\begin{align*}
  \E{S_t \given I_t = 1, Y_t = 0}
  &= \frac{\barv}{1 - \hat p q} - \frac{W}{1 - \hat p} + \Pr[I_{N(t)} = 1 \given
  I_t = 1] \E{S_{N(t)} \given I_{N(t)} = 1, Y_{N(t)} = 1}
\end{align*}
This is because in the time until system 2 is next active:
\begin{itemize}
  \item The agent gets expected utility $\barv$ as long as $I_{t'} = 1$ for each
    $t' \in \{t, \dots, N(t) - 1\}$. This is true with probability $q^{t'}$,
    leading to expected utility
    \begin{align*}
      \E{\sum_{t'=t}^{N(t) - 1} q^{t'-t} \barv} = \frac{\barv}{1 - \hat p q}
    \end{align*}
    because $N(t) - t$ is geometrically distributed, and given that system 1 is
    active, the agent chooses branch $i$ with probability proportional to $p_i$,
    leading to expected value
    \begin{align*}
      \frac{\sum_{i=1}^d p_i \barv_i}{\sum_{i=1}^d p_i} = \barv.
    \end{align*}
  \item The agent loses utility $W$ from forgoing its outside option for each
    $t' \in \{t, \dots, N(t) - 1\}$, leading to expected utility $-\frac{W}{1 -
    \hat p}$.
  \item At $N(t)$, system 2 is active. If $I_t = 1$, which happens with
    probability $\Pr[I_{N(t)} = 1 \given I_t = 1]$, the agent subsequently gets
    utility $\E{S_{N(t)} \given I_{N(t)} = 1, Y_{N(t)} = 1}$. Otherwise, the
    agent leaves the platform getting a net utility of 0.
\end{itemize}
Because the agent is memoryless, $\E{S_{N(t)} \given I_{N(t)} = 1, Y_{N(t)} = 1}
= \E{S_t \given I_t = 1, Y_t = 1}$. Putting this into~\eqref{eq:eS_breakdown},
\begin{align*}
  \E{S}
  &= \hat p \b{\frac{\barv}{1 - \hat p q} - \frac{W}{1 - \hat p} + \Pr[I_{N(t)} = 1
  \given I_t = 1] \E{S_{N(t)} \given I_{N(t)} = 1, Y_{N(t)} = 1}} \\
  &+ (1-\hat p) \E{S_t \given I_t = 1, Y_t = 1} \\
  &= \hat p \b{\frac{\barv}{1 - \hat p q} - \frac{W}{1 - \hat p}}
  + (1-\hat p + \hat p \Pr[I_{N(t)} = 1 \given I_t = 1]) \E{S_t \given I_t = 1,
  Y_t = 1}
  \numberthis \label{eq:eS_breakdown2}
\end{align*}
It therefore suffices to find $\E{S_t \given I_t = 1, Y_t = 1}$ and
$\Pr[I_{N(t)} = 1 \given I_t = 1]$. To do so, we begin by conditioning on $v_t$,
the highest value of any available option at time $t$.
\begin{equation}
  \E{S_t \given I_t = 1, Y_t = 1, v_t}
  = v_t + \frac{\vbar \hat p q}{1 - \hat p q} - \frac{W}{1 - \hat p}
  + \E{S_{N(t)} \given I_{N(t)} = 1} \Pr[I_{N(t)} = 1 \given I_t = 1]
  \label{eq:SIt}
\end{equation}
This is because
\begin{itemize}
  \item The $v_t$ term is the value of the content at this step.
  \item $\frac{\vbar \hat p q}{1 - \hat p q}$ gives the expected value of
    content chosen by system 1 until system 2 next wakes up.
  \item $\frac{W}{1 - \hat p}$ gives the expected value of the outside option
    until system 2 next wakes up.
  \item $\E{S_{N(t)} \given I_{N(t)} = 1} \Pr[I_{N(t)} = 1 \given I_t = 1]$
    gives the expected value starting from the next time system 2 is awake
    multiplied by the probability that the agent has not yet satiated.
\end{itemize}
Because the agent is memoryless,
\begin{align*}
  \E{S_{N(t)} \given I_{N(t)} = 1} = \E{S_t \given I_t = 1, Y_t = 1}.
\end{align*}
We can find $\psi \triangleq \Pr[I_{N(t)} = 1 \given I_t = 1]$ as follows.
\begin{align*}
  \psi
  &= \Pr[I_{N(t)} = 1 \given I_t = 1] \\
  &= q\b{(1 - \hat p) + \hat p \Pr[I_{N(t)} = 1 \given I_{t+1} = 1, N(t) > t+1]} \\
  &= q\b{(1 - \hat p) + \hat p \Pr[I_{N(t)} = 1 \given I_t = 1]} \tag{memoryless} \\
  &= q\b{(1 - \hat p) + \hat p \psi} \\
  \psi(1 - \hat p q)
  &= q (1 - \hat p) \\
  \psi
  &= \frac{q (1 - \hat p)}{1 - \hat p q}
\end{align*}
Plugging in to~\eqref{eq:SIt},
\begin{equation}
  \E{S_t \given I_t = 1, Y_t = 1, v_t}
  = v_t + \frac{\vbar \hat p q}{1 - \hat p q} - \frac{W}{1 - \hat p}
  + \E{S_t \given I_t = 1, Y_t = 1} \frac{q (1 - \hat p)}{1 - \hat p q}
  \label{eq:SIt2}
\end{equation}
Let $\mc V^d$ be the distribution of $v_t$, i.e., the max of $d$ independent
samples from $\mc V_1, \dots \mc V_d$. To get $\E{S_t \given I_t = 1}$, we have
\begin{align*}
  \E{S_t \given I_t = 1, Y_t = 1} = \int_{-\infty}^\infty \ind{\text{agent
  continues at $v_t$}} \cdot \E{S_t \given I_t = 1, Y_t = 1, v_t} \, d \mc V^d
\end{align*}
Of course, this is self-referential, since by~\eqref{eq:SIt2}, $\E{S_t \given
I_t = 1, Y_t = 1, v_t}$ depends on $\E{S_t \given I_t = 1, Y_t = 1}$; however,
note that $\E{S_t \given I_t = 1, Y_t = 1, v_t}$ is monotone increasing in
$v_t$, so the agent's policy must be to continue when $v_t$ is above some
threshold $\tau$:
\begin{align*}
  \E{S_t \given I_t = 1, Y_t = 1}
  &= \int_{\tau}^\infty \E{S_t \given I_t = 1, Y_t = 1, v_t} \, d \mc V^d \\
  &= \int_{\tau}^\infty \p{v_t + \frac{\vbar \hat p q}{1 - \hat p q} -
    \frac{W}{1 - \hat p} + \E{S_t \given I_t = 1, Y_t = 1} \frac{q (1 - \hat p)}{1
  - \hat p q}} \, d \mc V^d
\end{align*}
Define
\begin{align*}
  A(\tau)
  &\triangleq \int_{\tau}^\infty \p{v_t + \frac{\vbar \hat p q}{1 - \hat p q} -
  \frac{W}{1 - \hat p}} d \mc V^d
  = \E{v_t \cdot \ind{v_t \ge \tau}} + \p{\frac{\vbar \hat p q}{1 - \hat p q} -
  \frac{W}{1 - \hat p}} \Pr[v_t \ge \tau] \\
  B(\tau)
  &\triangleq \int_{\tau}^\infty \p{\frac{q (1 - \hat p)}{1 - \hat p q}} d \mc V
  = \p{\frac{q (1 - \hat p)}{1 - \hat p q}} \Pr[v_t \ge \tau] \\
  F(\tau) 
  &\triangleq A(\tau) + B(\tau) F(\tau) = \frac{A(\tau)}{1 - B(\tau)}
\end{align*}
By these definitions,
\begin{align*}
  \E{S_t \given I_t = 1, Y_t = 1}
  &= F(\tau^*)
\end{align*}
for the agent's threshold $\tau^*$. $F(\tau)$ gives the agent's expected utility
for using threshold $\tau$, so by definition, $\tau^*$ maximizes $F$. This
yields the agent's policy: the agent continues on the platform as long as $v_t
\ge \tau^*$, and our goal is now to find $\tau^*$ and $F(\tau^*)$. Define
\begin{align*}
  M(\gamma) \triangleq \max_\tau A(\tau) + B(\tau) \gamma.
\end{align*}
To find $M(\gamma)$, observe that
\begin{align*}
  \frac{d}{d\tau} A(\tau) + B(\tau) \gamma
  &= -\p{\tau + \frac{\vbar \hat p q}{1 - \hat p q} - \frac{W}{1 - \hat p} + \gamma
  \frac{q (1 - \hat p)}{1 - \hat p q}} f_{\mc V^d}(\tau)
\end{align*}
Choose $\tau_\gamma$ such that
\begin{align*}
  \tau_\gamma + \frac{\vbar \hat p q}{1 - \hat p q} - \frac{W}{1 - \hat p} + \gamma
  \frac{q (1 - \hat p)}{1 - \hat p q}
  &= 0 \\
  \tau_\gamma &\triangleq \frac{W}{1 - \hat p} - \frac{\vbar \hat p q}{1 - \hat
  p q} - \gamma \frac{q (1 - \hat p)}{1 - \hat p q}
\end{align*}
By this definition,
\begin{align*}
  \frac{d}{d\tau} A(\tau) + B(\tau) \gamma
  &\ge 0 && \forall{\tau < \tau_\gamma} \\
  \frac{d}{d\tau} A(\tau) + B(\tau) \gamma
  &\le 0 && \forall{\tau > \tau_\gamma}
\end{align*}
As a result, $\tau_\gamma$ weakly maximizes $A(\tau) + B(\tau) \gamma$, meaning
$M(\gamma) = A(\tau_\gamma) + B(\tau_\gamma) \gamma$. Thus, we can compute
$M(\gamma)$ for any $\gamma$ as long as we can compute $\E{v_t \cdot \ind{v_t
\ge \tau_k}}$ and $\Pr[v_t \ge \tau_k]$ for $v_t \sim \mc V^d$.

Using this, by Lemma~\ref{lem:gamma_simple}, we can find $\gamma^* \ge 0$
maximizing $M$ through binary search. Moreover, $\gamma^* = M(\gamma^*)$, i.e.,
$\gamma^*$ is a fixed point, and $\gamma^* = F(\tau^*) = \E{S_t \given I_t = 1,
Y_t = 1}$. We can find $\tau^* = \tau^*_{\gamma^*}$ as before, i.e,\footnote{In
general, $\tau^*$ will not be unique; however, it is \textit{functionally}
unique, in that if there is some other $\tau' \ne \tau^*$ that maximizes
$F(\cdot)$, then the distribution $\mc V$ has 0 probability mass between
$\tau^*$ and $\tau'$. Thus, the agent will behave identically under these
thresholds.}
\begin{align*}
  \tau^* &\triangleq \frac{W}{1 - \hat p} - \frac{\vbar \hat p q}{1 - \hat p q}
  - \gamma^* \frac{q (1 - \hat p)}{1 - \hat p q}.
\end{align*}
This fully specifies the agent's behavior: the agent leaves the platform when
system 2 is active and either $I_t = 0$ or $v_t < \tau^*$, and it remains on the
platform in all other cases. Note that if $\tau^*$ is sufficiently large (i.e.,
larger than any value in $\mc V$), the
agent will leave the platform whenever system 2 is active, regardless of $I_t$.
To find its expected utility, we can finally return to~\eqref{eq:eS_breakdown2}.
\begin{align*}
  \E{S}
  &= \hat p \b{\frac{\barv}{1 - \hat p q} - \frac{W}{1 - \hat p}}
  + (1-\hat p + \hat p \Pr[I_{N(t)} = 1 \given I_t = 1]) \E{S_t \given I_t = 1,
  Y_t = 1} \\
  &= \hat p \b{\frac{\barv}{1 - \hat p q} - \frac{W}{1 - \hat p}} + \p{1-\hat p
  + \hat p \frac{q (1 - \hat p)}{1 - \hat p q}} \E{S_t \given I_t = 1, Y_t = 1}
  \\
  &= \hat p \b{\frac{\barv}{1 - \hat p q} - \frac{W}{1 - \hat p}}
  + \p{1 - \frac{\hat p (1-q)}{1 - \hat p q}} \E{S_t \given I_t = 1, Y_t = 1} \\
  &= \hat p \b{\frac{\barv}{1 - \hat p q} - \frac{W}{1 - \hat p}}
  + \p{\frac{1 - \hat p}{1 - \hat p q}} \frac{\E{v_t \cdot
      \ind{v_t \ge \tau^*}} + \p{\frac{\vbar \hat p q}{1 - \hat p q} - \frac{W}{1
    - \hat p}} \Pr[v_t \ge \tau^*]}{1 - \p{\frac{q (1 - \hat p)}{1 - \hat p q}}
  \Pr[v_t \ge \tau^*]}
\end{align*}
If this is negative, the agent chooses not to use the platform at all, and
$\E{S} = 0$.

To find $\E{T}$, we could use a similar derivation to the one above, or we could
observe that we can find $\E{T}$ simply by setting $\vbar = 0$, $\E{v_t \cdot
\ind{v_t \ge \tau^*}} = 0$, and $W = -1$ in the expression for $\E{S}$, since
the agent loses utility $W$ to its outside option for every step it remains on
the platform.
\begin{align*}
  \E{T}
  &= \hat p \b{\frac{1}{1-\hat p}} + \p{\frac{1 - \hat p}{1 - \hat p q}}
  \frac{\p{\frac{1}{1-\hat p}} \Pr[v_t \ge \tau^*]}{1 - \p{\frac{q (1 - \hat
  p)}{1 - \hat p q}} \Pr[v_t \ge \tau^*]} \\
  &= \frac{\hat p}{1-\hat p} + \p{\frac{1}{1 - \hat p q}} \frac{\Pr[v_t \ge
  \tau^*]}{1 - \p{\frac{q (1 - \hat p)}{1 - \hat p q}} \Pr[v_t \ge \tau^*]}
\end{align*}
\end{proof}

\section{Supplementary lemmas}
\begin{lemma}
  \label{lem:half_lb}
  For $x > 0$,
  \begin{align*}
    \frac{x+1}{x} \p{1 - \frac{\ln (x+1)}{x}}
    &\ge \frac{1}{2}.
  \end{align*}
  For $-1 < x < 0$,
  \begin{align*}
    \frac{x+1}{x} \p{1 - \frac{\ln (x+1)}{x}}
    &\le \frac{1}{2}
  \end{align*}
\end{lemma}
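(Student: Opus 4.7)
The plan is to reduce the inequality to a single auxiliary function and analyze it by differentiating twice. Define
\begin{align*}
  h(x) \triangleq x^2 + 2x - 2(x+1)\ln(x+1),
\end{align*}
on the domain $x > -1$. Multiplying the target expression $\tfrac{x+1}{x}\bigl(1 - \tfrac{\ln(x+1)}{x}\bigr) - \tfrac{1}{2}$ through by $2x^2$, which is strictly positive for all $x \ne 0$ and hence sign-preserving, shows that the lemma is equivalent to
\begin{align*}
  h(x) \ge 0 \quad \text{for } x > 0, \qquad h(x) \le 0 \quad \text{for } -1 < x < 0.
\end{align*}

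First I would record the boundary values $h(0) = 0$ and then compute
\begin{align*}
  h'(x) &= 2x + 2 - 2\ln(x+1) - 2 = 2x - 2\ln(x+1), \\
  h''(x) &= 2 - \frac{2}{x+1} = \frac{2x}{x+1},
\end{align*}
noting in particular that $h'(0) = 0$.

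For $x > 0$, $h''(x) > 0$, so $h'$ is strictly increasing on $(0, \infty)$; combined with $h'(0) = 0$ this gives $h'(x) > 0$ there, and hence $h$ is strictly increasing on $(0, \infty)$, so $h(x) > h(0) = 0$ as desired. For $-1 < x < 0$, the sign flips: since $x+1 > 0$ and $x < 0$, we have $h''(x) < 0$, so $h'$ is strictly decreasing on $(-1, 0)$. Therefore $h'(x) > h'(0) = 0$ on $(-1, 0)$, which means $h$ is strictly increasing on $(-1, 0)$, and hence $h(x) < h(0) = 0$ on that interval, as desired.

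The only mild subtlety is keeping track of the direction of monotonicity on $(-1,0)$: because $h'$ is decreasing and equals $0$ at the right endpoint, one must argue that $h'$ is positive to the \emph{left} of $0$, which then makes $h$ itself increasing as $x$ moves up to $0$ and forces $h < 0$ on $(-1,0)$. With that sign bookkeeping handled, both inequalities drop out together, and the case $x = 0$ is not needed since the original expression has an apparent singularity there (and in fact both sides tend to $1/2$ in the limit, consistent with the strict inequalities on either side).
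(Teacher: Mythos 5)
Your proof is correct, and it takes a genuinely different route from the paper's. The paper works with the original quotient directly: it computes $\lim_{x\to 0}\frac{x+1}{x}\left(1-\frac{\ln(x+1)}{x}\right)=\frac12$ via L'H\^opital, then shows the function is increasing on $(-1,0)$ and on $(0,\infty)$ by differentiating it to get $\frac{(x+2)\ln(x+1)-2x}{x^3}$ and running a nested two-level derivative analysis on the numerator $(x+2)\ln(x+1)-2x$ (which itself vanishes to second order at $0$). You instead clear denominators up front: multiplying by $2x^2>0$ reduces both inequalities to the sign of $h(x)=x^2+2x-2(x+1)\ln(x+1)$, and since $h(0)=h'(0)=0$ and $h''(x)=\frac{2x}{x+1}$ has the sign of $x$, the conclusion follows from two clean differentiations with no limit computation and no quotient rule. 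Your algebra checks out ($2(x+1)(x-\ln(x+1))-x^2=h(x)$, and the monotonicity bookkeeping on $(-1,0)$ is handled correctly: $h'$ decreasing with $h'(0)=0$ forces $h'>0$ and hence $h<h(0)=0$ there). The trade-off is minor: the paper's argument additionally establishes that the original expression is monotone increasing on each interval, which is slightly more information (though the theorem that invokes the lemma obtains the limiting values it needs by separate computations anyway), while your argument is shorter, more elementary, and in fact yields the strict inequalities.
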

\begin{proof}
  Since the function in question is continuously differentiable for $x \ne 0$,
  it suffices to show that
  \begin{enumerate}
    \item $\lim_{x \to 0} \frac{x+1}{x} \p{1 - \frac{\ln (x+1)}{x}} =
      \frac{1}{2}$
    \item For $x \in (-1, 0) \cap (0, \infty)$, $\frac{d}{dx} \frac{x+1}{x} \p{1
      - \frac{\ln (x+1)}{x}} \ge 0$.
  \end{enumerate}
  First, we have
  \begin{align*}
    \lim_{x \to 0} \frac{x+1}{x} \p{1 - \frac{\ln (x+1)}{x}}
    &= \lim_{x \to 0} \frac{x^2 + x - (x+1) \ln(x+1)}{x^2} \\
    &= \lim_{x \to 0} \frac{2x + 1 - 1 - \ln(x+1)}{2x} \\
    &= \lim_{x \to 0} \frac{2x - \ln(x+1)}{2x} \\
    &= \lim_{x \to 0} \frac{2 - \frac{1}{x+1}}{2} \\
    &= \frac{1}{2}
  \end{align*}
  Thus, it suffices to show that
  \begin{align*}
    \frac{d}{dx} \frac{x+1}{x} \p{1 - \frac{\ln (x+1)}{x}} \ge 0 ~~~~ \forall x \in
    (-1, 0) \cap (0, \infty) \\
  \end{align*}
  Taking the derivative,
  \begin{align*}
    \frac{d}{dx} \frac{x+1}{x} \p{1 - \frac{\ln (x+1)}{x}}
    &= -\frac{x+1}{x} \p{\frac{\frac{x}{x+1} - \ln(x+1)}{x^2}} - \p{1 -
    \frac{\ln (x+1)}{x}} \frac{1}{x^2} \\
    &= -\frac{1}{x^2} + \frac{(x+1)\ln(x+1)}{x^3} - \frac{1}{x^2} +
    \frac{\ln(x+1)}{x^3} \\
    &= \frac{(x+2) \ln (x+1) - 2x}{x^3}
  \end{align*}
  For $x > 0$,
  \begin{align*}
    \frac{(x+2) \ln (x+1) - 2x}{x^3} \ge 0
    \Longleftrightarrow (x+2) \ln (x+1) - 2x \ge 0,
  \end{align*}
  and for $x < 0$,
  \begin{align*}
    \frac{(x+2) \ln (x+1) - 2x}{x^3} \ge 0
    \Longleftrightarrow (x+2) \ln (x+1) - 2x \le 0.
  \end{align*}
  Thus, it suffices to show that $(x+2) \ln (x+1) - 2x$ is weakly positive for
  $x > 0$ and weakly negative for $x \in (-1, 0)$. To do so, we will show the
  following:
  \begin{enumerate}
    \item At $x = 0$, $(x+2) \ln(x+1) - 2x = 0$ (which holds simply by plugging
      in $x = 0$).
    \item $\frac{d}{dx} (x+2) \ln (x+1) - 2x \ge 0$ for $x > -1$.
  \end{enumerate}
  Taking the derivative,
  \begin{align*}
    \frac{d}{dx} (x+2) \ln (x+1) - 2x
    &= \frac{x+2}{x+1} + \ln(x+1) - 2 \\
    &= \frac{1}{x+1} + \ln(x+1) - 1
  \end{align*}
  Again, when $x = 0$,
  \begin{align*}
    \frac{1}{x+1} + \ln(x+1) - 1 = 0,
  \end{align*}
  so it suffices to show that this is a minimum for $x \in (-1, \infty)$, i.e.,
  \begin{align*}
    \frac{d}{dx} \frac{1}{x+1} + \ln(x+1) - 1
    \begin{cases}
      \ge 0 & \forall x > 0 \\
      \le 0 & \forall x \in (-1, 0)
    \end{cases}
  \end{align*}
  Taking the derivative,
  \begin{align*}
    \frac{d}{dx} \frac{1}{x+1} + \ln(x+1) - 1
    &= -\frac{1}{(x+1)^2} + \frac{1}{x+1} \\
    &= \frac{x}{(x+1)^2}
  \end{align*}
  This is positive for $x > 0$ and negative for $x \in (-1, 0)$ as desired.
  Thus, the lemma holds.
\end{proof}

\begin{lemma}
  \label{lem:gamma_simple}
  Let $\gamma^* = F(\tau^*)$.
  \begin{enumerate}
    \item For any $\gamma \in [0, \gamma^*]$, $M(\gamma) \ge \gamma$.
    \item For any $\gamma > \gamma^*$, $M(\gamma) < \gamma$.
  \end{enumerate}
  This implies $M(\gamma^*) = \gamma^* = F(\tau^*)$.
\end{lemma}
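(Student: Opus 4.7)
The plan is to exploit two structural facts about $M$. First, $M(\gamma) = \max_\tau \b{A(\tau) + B(\tau)\gamma}$ is a pointwise supremum of affine functions of $\gamma$, each with slope $B(\tau) \in [0, 1)$; the bound $B(\tau) < 1$ follows because $\frac{q(1-\hat p)}{1-\hat p q} < 1$ whenever $q < 1$, and $\Pr[v_t \ge \tau] \le 1$. Second, the defining relation $F(\tau) = A(\tau)/(1 - B(\tau))$ rearranges to $A(\tau) + B(\tau) F(\tau) = F(\tau)$, so the $\tau$-indexed affine function passes through the diagonal point $\p{F(\tau), F(\tau)}$. Recalling that $\tau^*$ maximizes $F$, so that $\gamma^* = F(\tau^*) = \max_\tau F(\tau)$, the $\tau^*$-line in particular passes through $\p{\gamma^*, \gamma^*}$ with slope in $[0,1)$.

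For claim (1), I would just use the $\tau^*$-line as an explicit lower bound on $M$. Writing $A(\tau^*) = (1 - B(\tau^*))\gamma^*$ and plugging in gives
\begin{align*}
  M(\gamma) \ge A(\tau^*) + B(\tau^*)\gamma = \gamma + (1 - B(\tau^*))(\gamma^* - \gamma),
\end{align*}
and the right-hand side is $\ge \gamma$ whenever $\gamma \le \gamma^*$, since $1 - B(\tau^*) > 0$.

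For claim (2), I would argue by contradiction. Assume $M(\gamma) \ge \gamma$ for some $\gamma > \gamma^*$; then some $\tau$ achieves $A(\tau) + B(\tau)\gamma \ge \gamma$, i.e.\ $A(\tau) \ge (1 - B(\tau))\gamma$. Dividing by $1 - B(\tau) > 0$ yields $F(\tau) \ge \gamma > \gamma^* = \max_\tau F(\tau)$, a contradiction. The concluding identity $M(\gamma^*) = \gamma^*$ then combines part (1) at $\gamma = \gamma^*$ (which gives $M(\gamma^*) \ge \gamma^*$) with the reverse bound obtained by noting that for every $\tau$, $A(\tau) + B(\tau)\gamma^* = (1 - B(\tau))F(\tau) + B(\tau)\gamma^* \le \gamma^*$, using $F(\tau) \le \gamma^*$.

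There is no real obstacle here; the main conceptual content is recognizing that the self-referential identity defining $F$ says that each $\tau$-line is the fixed-point line for the value $F(\tau)$, and that maximality of $\tau^*$ over $\tau$ translates directly into the crossing behavior of the upper envelope $M$ relative to the diagonal precisely at $\gamma^*$.
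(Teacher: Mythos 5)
Your proof is correct and follows essentially the same route as the paper's: both arguments rest on the identity $A(\tau) = (1 - B(\tau))F(\tau)$, use the $\tau^*$-line as a lower bound on $M$ for part (1), and derive part (2) by contradiction from the maximality of $F(\tau^*)$ after dividing by $1 - B(\tau) > 0$. The only cosmetic difference is in the concluding identity, where the paper observes $\gamma^* = F(\tau^*) = \max_\tau \b{A(\tau) + B(\tau)F(\tau^*)} = M(\gamma^*)$ directly rather than sandwiching $M(\gamma^*)$ between two inequalities.
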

\begin{proof}
  For the first part, we'll show that in particular, $A(\tau^*) +
  B(\tau^*)\gamma \ge
  \gamma$. This is because
  \begin{align*}
    A(\tau^*) + B(\tau^*) \gamma
    &= A(\tau^*) + B(\tau^*) F(\tau^*) - B(\tau^*) F(\tau^*) + B(\tau^*) \gamma \\
    &= F(\tau^*) - B(\tau^*) F(\tau^*) + B(\tau^*) \gamma \\
    &= F(\tau^*)(1 - B(\tau^*)) + B(\tau^*) \gamma \\
    &\in [\gamma, F(\tau^*)]
  \end{align*}
  because $0 \le B(\cdot) < 1$.

  To show the second part, assume towards contradiction that there exists some
  $\tau$ and $\gamma > \gamma^*$ such that $A(\tau) + B(\tau) \gamma \ge
  \gamma$. As above,
  \begin{align*}
    A(\tau) + B(\tau) \gamma \in [\gamma, F(\tau)],
  \end{align*}
  so in order to have $A(\tau) + B(\tau) \gamma \ge \gamma$, we would need for
  this interval to lie above $\gamma$, i.e., $F(\tau) \ge \gamma > \gamma^* =
  F(\tau^*)$. But this is a contradiction because $F(\tau) \le F(\tau^*)$ for
  all $\tau$.

  Finally, observe that $M(\gamma^*) = \gamma^*$ because
  \begin{align*}
    \gamma^*
    &= F(\tau^*) \\
    &= \max_\tau A(\tau) + B(\tau) F(\tau^*) \\
    &= M(\gamma^*).
  \end{align*}
\end{proof}

\section{An Example of Tree Width Optimization}
\label{app:tree_ex}

\xhdr{Example \examp{ex:tree}: $d_S > d_T$}
Let $p = 0.01$, $q = 0$, and $W = 1$. The distribution $\mc V$ is
\begin{align*}
  v \sim \mc V
    &= \begin{cases}
      1.011 & \text{w.p. } 0.5 \\
      1.05 & \text{w.p. } 0.5
    \end{cases}
\end{align*}

In this example, the user wants to consume exactly one piece of content (since
$q = 0$). A higher $d$ means it will have better options from which to choose in
expectation, but it also means $\hat p$ will be higher. As a result, (1) it is
less likely that system 2 will be active at step $t = 0$, and (2) after the
user consumes the one piece of content it wants, it is more likely that system
1 will gain control and keep the user on the platform. We can use
Theorem~\ref{thm:tree_tau} to find $d_S = 2$ and $d_T = 1$. For $d = 1$, $\tau^*
= -\infty$, meaning the user will always consume content. For $d = 2$, however,
we can similarly derive $\tau^* = 1.05$, meaning if $v_0 = 1.011$ (the
highest-value content at step $t = 0$ has value $1.011$) and system 2 is active
at $t = 0$ (which happens with probability $1-\hat p$), the user will leave
without consuming anything. The user gets higher utility at $d = 2$ than $d =
1$ because it is more likely to see high-value content; however, it consumes
less because with some probability, it encounters only low-value content at $t =
0$ and immediately leaves without consuming anything.

\end{document}